
\documentclass[journal]{IEEEtran}
%


%

\usepackage{amsthm}
\usepackage{graphicx}
\usepackage{subcaption}
\usepackage{bbm}

\newtheorem{theorem}{Theorem}

\newtheorem{assumption}{Assumption}

%
\usepackage{cite}

\usepackage{bm}

\usepackage{amsmath}
\usepackage{amssymb}
\interdisplaylinepenalty=2500
\usepackage{mathtools}
\DeclareMathOperator*{\argmin}{arg\,min}

%

%
\usepackage{algorithm}
\usepackage{algorithmic}
\makeatletter
\def\BState{\State\hskip-\ALG@thistlm}
\makeatother
\usepackage{fixltx2e}

\usepackage{stfloats}

\usepackage{amsthm}
\usepackage{xcolor}
\usepackage{lipsum}
\usepackage[margin=0.75in]{geometry}
\newgeometry{top=1in, left=0.70in,right=0.70in,bottom=0.70in}

\hyphenation{op-tical net-works semi-conduc-tor}

\begin{document}
\setlength{\belowcaptionskip}{-10pt}
\setlength{\abovedisplayskip}{2pt}
\setlength{\belowdisplayskip}{2pt}
\setlength{\parskip}{0pt}
%
\title{Constrained Thompson Sampling for Real-Time Electricity Pricing with Grid Reliability Constraints}

\author{\IEEEauthorblockN{Nathaniel Tucker$^\dagger$ \quad}
\and
\IEEEauthorblockN{Ahmadreza Moradipari$^\dagger$\quad}
\and
\IEEEauthorblockN{Mahnoosh Alizadeh}
\\$^\dagger$Authors have equal contribution
\vspace*{-0.65cm}
}


%



\maketitle

\begin{abstract}
We consider the problem of an electricity aggregator attempting to learn customers' electricity usage models while implementing a load shaping program by means of broadcasting dispatch signals in real-time. We adopt a multi-armed bandit problem formulation to account for the stochastic and unknown nature of customers' responses to dispatch signals. We propose a constrained Thompson sampling heuristic, Con-TS-RTP, as a solution to the load shaping problem of the electricity aggregator attempting to influence customers' usage to match various desired demand profiles (i.e., to reduce demand at peak hours, integrate more intermittent renewable generation, track a desired daily load profile, etc). The proposed Con-TS-RTP heuristic accounts for day-varying target load profiles (i.e., multiple target load profiles reflecting renewable forecasts and desired demand patterns) and takes into account the operational constraints of a distribution system to ensure that customers receive adequate service and to avoid potential grid failures. We provide a discussion on the regret bounds for our algorithm as well as a discussion on the operational reliability of the distribution system's constraints being upheld throughout the learning process.
\end{abstract}

\begin{IEEEkeywords}
Constrained optimization, distribution network, multi-armed bandit, real-time pricing, demand response, Thompson sampling.
\end{IEEEkeywords}

%
\IEEEpeerreviewmaketitle
\newtheorem{proposition}{Proposition}
\makeatletter
\def\blfootnote{\xdef\@thefnmark{}\@footnotetext}
\makeatother

\blfootnote{
\indent
This work was supported by NSF grants \#1847096 and \#1737565 and UCOP Grant LFR-18-548175.\\
\hspace{8.5pt}N. Tucker, A. Moradipari, and  M. Alizadeh are with the Department of Electrical and Computer Engineering, University of California, Santa Barbara, CA 93106 USA (email: nathaniel\_tucker@ucsb.edu)}

\section{Introduction}
\label{section: Intro}
In order to integrate the increasing volume of intermittent renewable generation in modern power grids,  aggregators are exploring various methods to manipulate both residential and commercial loads in real-time. As a result, various demand response (DR) frameworks are gaining popularity because of their ability to shape electricity demand by broadcasting time-varying signals to customers; however, most aggregators have not implemented complex DR programs beyond peak shaving and emergency load reduction initiatives. One reason for this is the customers' unknown and time-varying responses to dispatch signals, which can lead to economic uncertainty for the aggregator and reliability concerns for the grid \cite{RTP_DR}. The aggregator could explicitly request price sensitivity information from its customers; however, this two-way negotiation has a large communication overhead and most customers cannot readily characterize their price sensitivities, and even if they could, they might not be willing to share this private information. \textcolor{black}{As such, aggregators prefer the 1-way \textit{passive} approach because it does not require any real-time feedback from the customer and it does not require new communication infrastructure  for reporting preferences (e.g., a web portal, phone application, etc.).} With this in mind, future load shaping initiatives for renewable integration (i.e., more complex objectives than peak shaving) need to be able to \textit{passively} learn customers' response to dispatch signals only from historical data of past interactions \cite{gomez2012learning}.

Recently, much work has been done for aggregators attempting to learn customers' price responses whilst implementing peak shaving DR programs. The authors of \cite{driz11} present a data-driven strategy to estimate customers' demands and develop prices for DR. In \cite{driz12}, the authors use linear regression models to derive estimations of customers' responses to DR signals. Similarly, \cite{driz13} develops a joint online learning and pricing algorithm based on linear regression. In \cite{driz15}, the authors present a contract-based DR strategy to learn customer behavior while broadcasting DR signals. The authors of \cite{driz17} present an online learning approach based on piecewise linear stochastic approximation for an aggregator to sequentially adjust its DR prices based on the behavior of the customers in the past. In \cite{dvorkin_9}, the authors develop a risk-averse learning approach for aggregators operating DR programs. In \cite{dvorkin_10}, a learning algorithm for customers' utility functions is developed and it is assumed that the aggregator acts within a two-stage (day-ahead and real-time) electricity market. \textcolor{black}{Additionally, the authors of \cite{zhou2016residential} present a learning framework for forecasting individual loads and DR capabilities and find that users with more variable consumption patterns are more effective DR participants.} Using a similar framework as in this work, a multi-armed bandit (MAB) formulation is used in  \cite{driz14,driz19} to determine which customers to target in DR programs.

In addition to learning how customers respond to DR signals, an  aggregator must also consider power system constraints to ensure reliable operation (e.g., nodal voltage, transformer capacities, and line flow limits). In real distribution systems, it is critical that these constraints are satisfied at every time step to ensure customers receive adequate service and to avoid potential grid failures  even without sufficient knowledge about how customers respond to price signals (i.e., in early learning stages) \cite{dall2017chance, mieth2018data}. \textcolor{black}{One paper similar to ours that considers these realistic constraints, \cite{mieth2018online}, presents a least-square estimator approach to learn customer sensitivities and implements DR in a distribution network. The authors of \cite{mieth2018online} show that their least-square algorithm’s parameter estimation error converges to zero over time, thus the algorithm’s regret is sublinear while also accounting for the distribution network’s constraints.}

Similar to the aforementioned papers, the work presented in this manuscript considers the problem of an aggregator \textit{passively} learning the customers' price sensitivities while running a load shaping program. However, our approach permits more complex load shaping objectives (e.g., tracking a daily target load profile) and varies in terms of both load modeling and learning approach from all the above papers. Specifically, we present a  multi-armed bandit (MAB) heuristic akin to Thompson sampling (TS) to tackle the trade-off between exploration of untested price signals and exploitation of well-performing price signals while ensuring grid reliability. It is important to note that the standard TS heuristic cannot guarantee that grid reliability constraints are upheld during the learning process. As such, we present two {\it modified} versions of TS while retaining the fundamental principles TS is based on. Furthermore, we provide discussion on how the constraints are upheld (i.e., operational reliability) for the modified heuristics, discussion on the performance of the  heuristics compared to a clairvoyant solution, and simulation results highlighting the strengths of the method. 

\textcolor{black}{In our work, we make use of a load clustering technique in order to {\it exploit the known physical structure} of the problem and make use of our prior knowledge of how flexible electric appliances behave to lower the problem dimensionality. We note that grouping (clustering) loads for dimensional reduction is common in DR literature \cite{7085625}. Some pertinent examples include \cite{mathieu2012state} where the authors aggregate heterogeneous thermostatically controlled loads (TCLs) using an LTI “bin” model, \cite{foster2013optimal} where the authors group EVs into “classes” depending on their charging availability, \cite{8107585} where the authors present a load profile clustering method for load data classification based on information entropy, piecewise aggregate approximation, and spectral clustering, \cite{6545388} where the authors present aggregate models for classes of TCLs that include statistical information of the population, systematically deal with heterogeneity, and account for a second-order effects, \cite{7364011} where the authors propose a clustering technique for determining natural segmentation of customers and identification of temporal consumption patterns in the smart grid domain, and \cite{7206601} where the authors develop cohorts, or groups of consumers with similar consumption patterns, from correlations between daily loads.}

\textcolor{black}{The \textbf{main contributions} of this work are as follows:
\begin{itemize}
    \item We use the multi-armed bandit (MAB) framework to model the stochastic and unknown nature of customers' daily aggregate response to electricity prices.
    \item We make use of an appliance clustering methodology to provide a mesoscopic model of the price responsive demand of a large population of flexible  appliances and reduce the dimensionality of the learning problem.
    \item Our learning framework can account for daily variabilities and realistic grid reliability constraints that are critical for daily operation in spite of uncertainty about customers' price response.
    \item We present two modified heuristics based on Thompson sampling (TS) as solutions to the constrained learning and pricing problem.
    \item We provide a performance guarantee in the form of a regret bound and discussion on the reliability guarantees of the approach as well as a distribution system case study demonstrating the efficacy of the approach.
\end{itemize}
}

The remainder of the paper is organized as follows: Section \ref{section: Problem Formulation} presents the aggregator's daily objective as well as the customers' load model. Section \ref{section: MAB} describes the multi-armed bandit formulation for the electricity pricing problem, presents the modified TS heuristic, and discusses its performance and reliability. Section \ref{section: simulation} presents simulation results that showcase the efficacy of the approach. The Appendix contains a table of notation and proofs.

\section{Problem Formulation}
\label{section: Problem Formulation}

\subsection{The Aggregator's Objective}
\label{subsection: DSO's Objective}
The aggregator's main goal is to select dispatch signals to manipulate customer demand according to a given optimization objective that varies daily.  Specifically, we consider the case where the aggregator broadcasts a dispatch signal $\mathbf{p}_{\tau} = [p(t)]_{t=1,\dots,T}$ to the population of customers each day (we use $t=1,\dots,T$ to index time of day and $\tau=1,\dots,\mathcal{T}$ to index days). The set of dispatch signals available for use by the aggregator is denoted as $\mathcal P$.
In this paper, without a loss of generality, we will assume that the dispatch signal sent to customers for load shaping purposes is a real-time pricing (RTP) signal\footnote{The reader should \textcolor{black}{note} that this choice is not fundamental to the development of the modified learning heuristics we present in this paper. It only allows us to provide a concrete characterization of the response to dispatch signals by mathematically modeling the customers as cost-minimizing  agents equipped with home energy management systems in Section \ref{subsection: Cluster Price Response}.}. 

\textcolor{black}{The aggregator's cost function could cover a broad range of goals including (but not limited to) manipulating the population’s load to match a target profile, minimizing the distribution grid’s electricity cost from the regional retailer, or solving for the dispatch of multiple generators, if a market is operated at the distribution system level.}

\textcolor{black}{In this work,} on each day $\tau$, \textcolor{black}{we assume} the aggregator's cost function is a fixed and known nonlinear function $f(\mathbf{D}_{\tau}^{}(\mathbf{p}_{\tau}), \mathbf{V}_{\tau})$ that depends on the load profile $\mathbf{D}_{\tau}(\mathbf{p}_{\tau})$ of the population in response to the daily broadcasted price $\mathbf{p}_{\tau}$ and a random exogenous parameter vector $\mathbf{V}_{\tau}$\footnote{\textcolor{black}{We note that the function $f$ need not have a closed form representation and thus can represent the solution of an economic dispatch problem with multiple generators, which can still be handled through our framework. However, without loss of generality and purely for brevity of notation, here we focus on common distribution systems which usually lack two-sided markets, and thus we focus on load profile manipulation for renewable integration  and distribution system protection.}}. \textcolor{black}{The population's load profile on day $\tau$, $\mathbf{D}_{\tau}(\mathbf{p}_{\tau})$, is a $T\times1$ vector with the $t^{th}$ element corresponding to the population's power demand during time period $t$.} The exogenous and given \textcolor{black}{$T\times1$} vector $\mathbf{V}_{\tau}$ varies daily and can correspond to a daily target profile reflecting renewable generation forecasts, weather predictions, and grid conditions. We consider the exogenous vectors to be i.i.d. drawn from a distribution defined on a finite sample space $\mathcal{V}$, with each outcome  drawn with a nonzero probability. \textcolor{black}{We would like the reader to note that this assumption is only made for convenience for our theoretical regret performance guarantee in  Theorem \ref{thm conTS}. In a real-world implementation, the daily exogenous parameters could be correlated across days (e.g., due to weather, seasons, weekday/weekend, etc.). However, this correlation does not affect the safety guarantees of our algorithm or its applicability (i.e., it only affects our formal regret results).} 

The aggregator must ensure that the broadcasted price signals do not result in load profiles that violate distribution system reliability constraints (e.g., nodal voltage, transformer capacities, or line flow limits). As such, if the aggregator had full information about how the population responds to price signals (i.e., full knowledge of $\mathbf{D}_{\tau}^{}(\mathbf{p}_{\tau})$), the aggregator can solve the following optimization problem on day $\tau$ to select the optimal price $\mathbf{p}^{\star}_{\tau}$:
\begin{align}\label{clairvoyant}
     &\mathbf{p}_{\tau}^{\star} =
        \argmin_{\mathbf{p}_{\tau}\in\mathcal{P}}  f\big(\mathbf{D}_{\tau}^{}(\mathbf{p}_{\tau}), 
        \mathbf{V}_{\tau}\big)\\
        &\text{\hspace{24pt}s.t.\hspace{24pt}} g_j\big(\mathbf{D}_{\tau}^{}(\mathbf{p}_{\tau})\big)\leq0,\;\;\forall j=1,\dots,J
\end{align}
where $g_j(\cdot)_{j=1,\dots,J}$ is used to represent the reliability constraints for the distribution system. \textcolor{black}{We note that these general constraints need not be linear for the proposed Thompson sampling approach.}

However, as explained in the introduction, knowledge of customers' price response is unavailable to the aggregator. Recall, 1) the aggregator does not want to directly query customers for their price sensitivities, 2) most customers cannot readily characterize their price sensitivities, and 3) customers  might  not  be  willing  to  share this private information. Accordingly, the aggregator needs a method to sequentially choose daily price signals to simultaneously 1) control their daily incurred cost; 2) learn the customers' price response models; and 3) ensure the distribution system constraints are not violated at any time. 

\subsection{Distribution System Operational Constraints}
\label{subsection: Network Constraints}
As stated previously, there are various operational constraints within a distribution system that should be met in order to ensure adequate service for customers and to prevent grid failures. In the aggregator's daily optimization in Section \ref{subsection: DSO's Objective}, the constraints are formulated as general functions $g_j(\cdot)_{j=1,\dots,J}$. Specifically, these general functions represent  distribution system parameters  (i.e., the nodal voltage $u_{\tau}(t)$ and power flow through distribution lines $f_{\tau}(t)$) that should obey the following constraints:
\begin{align}
    \label{eqn: voltage min const}
    u_{\tau}(t) &\geq u^{min}, \;\;\forall t,\tau,\\
    \label{eqn: voltage max const}
    u_{\tau}(t) &\leq u^{max}, \;\;\forall t,\tau,\\
    \label{eqn: power flow const}
    f_{\tau}(t) &\leq S^{max}, \;\;\forall t,\tau,
\end{align}
where $u^{min}$, $u^{max}$, and $S^{max}$ correspond to the lower voltage limit, upper voltage limit, and power flow limit, respectively, for the population's connection to the distribution grid. We note that $u_{\tau}(t)$ and $f_{\tau}(t)$ can be easily derived from the population's load profile $\mathbf{D}_{\tau}^{}(\mathbf{p}_{\tau})$ (See Section \ref{subsection: Power Flow Model}). Now that we have described the aggregator's objective and the distribution system's constraints, we next describe the customers' load model as well as their price response model.

\subsection{Load Flexibility Model}
\label{subsection: Electricity Consumption Clusters}

It is hard to approach the problem of learning the response of a population of customers to complex dispatch signals such as RTP as a complete ``black box problem'', i.e., by just observing the broadcasted price and the load response. There are many reasons for this, including 1) the existence of random or exogenous parameters  which lead to variability in the  temporal and geographical behavior of electricity demand; 2) the variability of the control objective on a daily basis (e.g., due to randomness in renewable generation outputs, market conditions, or baseload); and 3) the small size of the set of observations that one can gather compared to the high dimensional structure of the load (there are only 365 days in a year, so only 365 \textcolor{black}{sets} of prices can be posted). Hence, in this paper, we will be {\it exploiting the known physical structure} of the problem and making use of our statistical   prior knowledge of how the load behaves to lower the problem dimensionality.

Specifically, to lower the dimensionality for the learning problem, we explore the fact that flexible loads only show limited number of ``load signatures'' \textcolor{black}{(justified due to the automated nature of load response through home energy management systems, the limited types of flexible appliances, and the common electricity usage patterns that emerge from electricity customers as shown in \cite{zhou2016residential,kwac2014household})}. Let us assume that electric appliances can belong to a finite number of clusters $c\in\mathcal{C}$. For each cluster $c$, we denote $\mathcal{D}_c^{}$ as the set of feasible daily power consumption schedules that satisfy the energy requirements of the corresponding appliances. Any power consumption schedule, $[d_{c}(t)]_{t=1,\dots,T} = \mathbf{D}_c \in \mathcal{D}_c$, would satisfy the daily power needs of an appliance in cluster $c$. For example, consider a cluster that represents plug-in electric vehicles (EVs) that require $E_c$ kWh in the time interval $[t_1, t_2]$ with a maximum charging rate of $\rho_c$ kW. Accordingly, the set $\mathcal{D}_c$ of daily feasible power consumption schedules is given by:
\begin{align}
\label{eqn: cluster_example}
    \mathcal{D}_c = \Big\{ \mathbf{D}_c | \sum_{t=t_1}^{t_2} d_c(t) = E_c;\; 0\leq d_c(t) \leq \rho_c\Big\}.
\end{align}
\textcolor{black}{Another specific cluster example is that of electric appliances that are uninterruptible but can perform load shifting (e.g., a dishwasher cannot be interrupted once it is turned on but the start time of the cycle can be shifted). Let $\Pi_c(\cdot)$ denote the load profile of uninterruptible cluster $c$ appliances once they are turned on. For example, $\Pi_c(\cdot)$ could be a rectangular pulse function that outputs the rated power of the appliance, $\rho_c$ for the duration of the appliance's cycle and $0$ otherwise. To relay their load flexibility, cluster $c$ users can specify a time interval $[t_{c,1},t_{c,2}]$ within which the appliance cycle must start (e.g., a user wants the dishwasher to be finished before dinner). Thus the home energy management system can calculate the best values for the time shift, denoted by $t_c$, as long as it lies within the interval $[t_{c,1},t_{c,2}]$. The set $\mathcal{D}_c$ of daily feasible power consumption schedules for appliances in this cluster is given by:
\begin{align}
\label{eqn: cluster_example2}
    \mathcal{D}_c = \Big\{ \mathbf{D}_c | & d_c(t) =  \Pi_c(t-t_c);\;t_{c}\in [t_{c,1},t_{c,2}]\; \Big\}.
\end{align}}
\noindent\textcolor{black}{For discussion on characterizing the sets for other flexible appliances, including interruptible (Section III.B in \cite{mahnoosh_cluster_driz_20}, non-interruptible (Section III.D in \cite{mahnoosh_cluster_driz_20}), and thermostatically controlled loads (Section III.C in \cite{mahnoosh_cluster_driz_20}), we refer the reader to reference \cite{mahnoosh_cluster_driz_20}.}

By adopting this model, the total power consumption flexibility of a population of customers can be characterized as a function of how many appliances belong to each cluster within the population. Let us denote $a_{c}$ as the number of appliances in cluster $c$ (note that this will vary on a day by day basis as described in the next section). With this notation, we can write the set of feasible daily power consumption profiles for the population, $\mathcal{D}_{}$:
\begin{align}\label{minkowski}
    \mathcal{D}_{} = \sum_{c\in\mathcal{C}} a_{c} \mathcal{D}_c,
\end{align}
where the summation and scalar multiplication operations are defined in the sense of Minkowski addition\footnote{For two sets $A$ and $B$ defined on a finite dimensional Euclidean space, the Minkowski sum is defined as $A+B = \{\mathbf{a}+\mathbf{b}~|~\mathbf{a}\in A, \mathbf{b}\in B\}$.}.

\textcolor{black}{We would like to note that choosing the number of clusters in the model is a control knob that can be tuned by the aggregator as shown in \cite{alizadeh2011information}. Using a higher number of potential appliance clusters will increase the accuracy of the load model (i.e., reduce the quantization error in the reproduction of the individual load profiles) and yield better performance in the daily optimization once the true parameters have been sufficiently learned by the aggregator. However, increasing the number of load clusters increases the size of the problem space and increases the randomness in the customers’ daily loads thus slowing down the learning rate of the algorithm. The number of clusters will vary depending on the system being analyzed as well as the aggregator’s preferences.}

\subsection{Price Response Model}
\label{subsection: Cluster Price Response}
In this section, we discuss how the total population responds to dynamic electricity prices given the load flexibility model in \eqref{minkowski} \textcolor{black}{and how clustering is used to reduce the dimensionality of the problem}.  There are two main ways dynamic pricing affects the power consumption: 1) \textit{Automated per cluster response}: Within each load  cluster $c$ (i.e., given pre-specified preferences such as EV charging deadlines or AC temperature set points), we assume that the customer chooses the  power consumption profile $\mathbf{D}_c \in \mathcal{D}_c$ that minimizes their electricity cost dependent on the daily broadcasted price $\mathbf{p}_{\tau}$. For appliances in cluster $c$ on day $\tau$, we assume all will choose the same minimum cost  power consumption profile:
\begin{align}
\label{eqn: auto response}
    \widetilde{\mathbf{D}}_{c,\tau}(\mathbf{p}_{\tau}) = \argmin_{\mathbf{D}_c \in \mathcal{D}_c} \sum_{t=1}^T p(t) d_{c}(t).
\end{align}
\textcolor{black}{We assume that each appliance will always choose the cost minimizing power consumption profile out of the available profile set to combat the fact that the available profile sets $\mathcal{D}_c$ for each cluster can be infinitely large. Thus, we have effectively reduced the dimensionality of the problem as we know \textit{a priori} how each cluster will respond to each price signal (i.e., each cluster will always select its cost minimizing profile).} Due to the automated nature of home energy management systems, each cluster selecting its cost minimizing profile is a reasonable assumption once the customers have defined their flexibility preferences, e.g., the desired charge amounts and deadlines  for EVs \cite{chang2012coordinated, alizadeh2013least}. 2) \textit{Preference Adjustment}: We also consider the fact that customers may respond to price signals by adjusting their preferences. \textcolor{black}{Consider the following example: two customers (Customer-A and Customer-B) live in the same neighborhood but have different sensitivities to electricity prices. If electricity prices are high on a hot summer day, Customer-A might shutdown their air conditioner to avoid a large electricity bill; however, Customer-B prioritizes comfort over cost-savings, and leaves their air conditioner on, no matter the cost. As shown in the previous example, the number of appliances in each cluster, i.e., $a_{c}$ in \eqref{minkowski}, also depends on the daily posted price vector $\mathbf{p}_{\tau}$, and are now denoted as $a_{c}(\mathbf{p}_{\tau})$. }

Combining the  \textit{automated per cluster response} and \textit{preference adjustment}, we can define the population's load on day $\tau$ in response to the posted price $\mathbf{p}_{\tau}$ as follows:
\begin{align}
    \mathbf{D}_{\tau}^{\star}(\mathbf{p}_{\tau}) = \sum_{c\in\mathcal{C}} a_{c}(\mathbf{p}_{\tau}) \widetilde{\mathbf{D}}_{c,\tau}(\mathbf{p}_{\tau}).
\end{align}
As stated before, if the aggregator has full knowledge of the customers' price responses, which reduces to having full knowledge of the preference adjustments $a_{c}(\mathbf{p}_{\tau})$, then the aggregator can pick the daily price vector $\mathbf{p}_{\tau}^{\star}$ in order to shape the population's power consumption according to \eqref{clairvoyant}. However, as we cannot assume this, we model the  $a_{c}(\mathbf{p}_{\tau})$'s as random variables with parameterized distributions, $\phi_c$, based on the posted price signal $\mathbf{p}_{\tau}$ and an unknown but constant parameter vector $\boldsymbol{\theta}_{}^{\star}$. Here, $\boldsymbol{\theta}_{}^{\star}$ represents the \textit{true model} for the customers' sensitivity to the price signals. This allows for the complex response of the customer population to be represented as a single vector, thus reducing the dimensionality of the problem. \textcolor{black}{We note that while $a_{c}(\mathbf{p}_{\tau})$ may only take integer values in reality, we believe it is justified to relax this integrality constraint and allow it to take continuous values with large enough appliance population size.} With this in mind, we would like to highlight three properties of the price response model:
\begin{enumerate}
    \item The preference adjustment models $a_{c}(\mathbf{p}_{\tau})$ are stochastic and their distributions $\phi_c$ are parameterized by $\mathbf{p}_{\tau}$ and $\boldsymbol{\theta}_{}^{\star}$. This is due to exogenous factors outside of the aggregator's scope that influence customers' power consumption profiles resulting in a level of stochasticity in the responses to prices (i.e., customers will not respond to prices in the same fashion each day).
    
    \item The probability distributions of $a_{c}(\mathbf{p}_{\tau})$ (i.e., $\phi_c$) are unknown to the aggregator, i.e., the aggregator does not know the  true parameter $\boldsymbol{\theta}_{}^{\star}$ of the stochastic model. 
   
    \item The realizations of $a_{c}(\mathbf{p}_{\tau})$ are not directly observable by the aggregator. The aggregator can only monitor the population's total consumption profile $\mathbf{D}_{\tau}$ and cannot observe the decomposed response of each cluster $a_{c}(\mathbf{p}_{\tau})\widetilde{\mathbf{D}}_{c,\tau}(\mathbf{p}_{\tau})$ independently. 
\end{enumerate}

Because we have introduced stochasticity to customers' price response models, we appropriately alter the aggregator's optimization problem for selecting the price signal on day $\tau$ to account for the distributions $\phi_c$:
\begin{align}\label{NOTclairvoyant}
     &\mathbf{p}_{\tau}^{\star} =
        \argmin_{\mathbf{p}_{\tau}\in\mathcal{P}} \mathbb{E}_{\{\phi_c\}_{c\in\mathcal{C}}}\big[ f\big(\mathbf{D}_{\tau}^{}(\mathbf{p}_{\tau}), 
        \mathbf{V}_{\tau}\big)\big]\\
        \label{NOTclairvoyant_constraint}
        &\text{\hspace{0pt}s.t.\hspace{13pt}} \mathbb{P}_{\{\phi_c\}_{c\in\mathcal{C}}}\big[g_j\big(\mathbf{D}_{\tau}^{}(\mathbf{p}_{\tau})\big)\leq0\big]\geq 1-\mu,\;\;\forall j
\end{align}
where $\mu$ is the aggregator's desired reliability metric for the distribution system constraints. In \eqref{NOTclairvoyant}, the aggregator now considers minimizing an expected cost and is subject to probabilistic reliability constraints in \eqref{NOTclairvoyant_constraint} that depend on the distributions $\phi_c$ of the preference adjustment models $a_c(\mathbf{p}_\tau)$. 

\textcolor{black}{We note that the formulated chance constraints are enforced with respect to uncertainty in the clusters' price sensitivity parameters, not to the exogenous context vector $\mathbf{V}_{\tau}$. In this work, we assume the daily exogenous vector is fully known each day and does not add uncertainty to the problem. However, uncertainties in the exogenous vector are important to real-world systems such as the power grid and can be accommodated by our approach by adding external noise to these vectors in the same fashion as noise being added to the population's load. This, of course, would further slow down the learning rate of the algorithm due to the added noise reducing the effectiveness of each posterior update.}

Clearly, the aggregator needs to learn the underlying parameters of the stochastic models $\phi_c$ of how customers respond to price signals in order to select price signals for load shaping initiatives (i.e., the aggregator needs to learn $\boldsymbol{\theta}^{\star}$). Our proposed learning approach and pricing strategy for an electricity aggregator is detailed in the next section.

\section{Real-Time Pricing via Multi-Armed Bandit}
\label{section: MAB}

\subsection{Multi-Armed Bandit Overview}
\label{subsection: MAB formulation}

We utilize the multi-armed bandit (MAB) framework to model the iterative decision making procedure of an aggregator implementing a daily load shaping program \cite{krishnamurthy2018semiparametric, foster2018practical, xu2013thompson}. \textcolor{black}{The MAB problem can be described as a decision making problem where an agent has a set of available actions but can only take one action per round. After an action is taken, \textcolor{black}{the agent experiences a cost that is }dependent on the action taken. The agent can only learn about the distribution of costs from each action by experimenting. Throughout this iterative procedure, the agent faces the core dilemma: should the agent \textit{exploit} actions that have yielded small costs, or \textit{explore} actions that have not been tested thoroughly? The goal in a MAB problem is to develop a strategy for selecting actions that balance this trade-off and minimize the cumulative cost over a given time span. More thorough explanation and background of the MAB problem can be found in \cite{russo2018tutorial}.}

For the electricity pricing problem, the MAB framework exemplifies the \textit{exploration-exploitation} trade-off dilemma faced by an aggregator each day. Namely, should the aggregator choose to broadcast untested prices (i.e., \textit{explore}) to learn more information about the customers? Or should the aggregator choose to broadcast well-performing prices (i.e., \textit{exploit}) to manipulate the daily electricity demand? 

To evaluate the performance of an algorithm that aims to tackle the exploration-exploitation trade-off, one commonly examines the algorithm's \textit{regret}. Formally, regret is defined as the cumulative difference in cost incurred over $\mathcal{T}$ days between a clairvoyant algorithm (i.e., the optimal strategy that is aware of the customers' price responses) and any proposed algorithm that does not know the customers' price responses:
\begin{align}
\label{eqn:obj regret}
    R_{\mathcal{T}}=\sum_{\tau=1}^{\mathcal{T}} f(\mathbf{D}_{\tau}^{}(\mathbf{p}_{\tau}), 
        \mathbf{V}_{\tau})-f(\mathbf{D}_{\tau}^{}(\mathbf{p}^{\star}_{}), 
        \mathbf{V}_{\tau}).
\end{align}

Instead of considering the cumulative difference in objective function value, an alternative metric for regret is to count the number of times that suboptimal price signals are selected over the $\mathcal{T}$ days. For this, we introduce the following notation: let $\mathbf{p}^{\mathbf{V}_{\tau},\star}$ denote the optimal price signal for the true model of the population's price response $\boldsymbol{\theta}_{}^{\star}$ when the daily exogenous parameter $\mathbf{V}_{\tau}$ is observed on day $\tau$. Any price signal $\mathbf{p}_{\tau}\neq\mathbf{p}_{}^{\mathbf{V}_{\tau},\star}$ is considered a suboptimal price. Moreover, we denote $N_{\tau}(\mathbf{p},\mathbf{V})$ as the number of times up to day $\tau$ that the algorithm simultaneously observes the exogenous parameter $\mathbf{V}$ and selects the price signal $\mathbf{p}$. 
As such, the total number of times that suboptimal price signals are selected over $\mathcal{T}$ days is:
\begin{align}
\label{eq: alt regret}
    \sum_{\mathbf{V}\in\mathcal{V}}
    \sum_{\mathbf{p}\in\{\mathcal{P}\setminus\mathbf{p}^{\mathbf{V},\star}\}}
    N_{\mathcal{T}}(\mathbf{p},\mathbf{V})
    =
    \sum_{\tau=1}^{\mathcal{T}}
    \mathbbm{1}^{\{\mathbf{p}_{\tau}\neq\mathbf{p}_{}^{\mathbf{V}_{\tau},\star}\}},
\end{align}
where $\mathbbm{1}^{\{\cdot\}}$ is the indicator function that is set equal to one if the criteria is met and zero otherwise. Subsequently, in an iterative decision making problem such as this, the question arises: \textit{how can an aggregator learn to price electricity with bounded regret, and what are the regret bounds we can provide for a proposed algorithm given dynamically changing grid conditions and reliability constraints?} In the following sections, we present a modified Thompson sampling heuristic for the electricity pricing problem to simultaneously learn the true model $\boldsymbol{\theta}_{}^{\star}$ for the population, select the daily price signals, ensure grid reliability, and provide a  regret guarantee.


\subsection{Thompson Sampling}
\label{subsection: TS}

Thompson sampling (TS) is a well-known MAB heuristic for choosing actions in an iterative decision making problem with the \textit{exploration-exploitation} dilemma \cite{russo2018tutorial, russo2014learning, agrawal2012analysis}. \textcolor{black}{Two other well-studied frameworks, greedy algorithms and upper-confidence bound (UCB) algorithms, have shown promise in this problem area. However, greedy algorithms are inferior to Thompson sampling in regret performance and UCB algorithms are restricted to simpler linear optimizations \cite{auer2002using,dani2008stochastic,abbasi2011improved}, whereas Thompson sampling can readily handle more general objective functions such as those adopted in our paper \cite{gopalan2014thompson}. Additionally, a novel aspect of our paper is that we have shown how to modify the Thompson sampling heuristic to account for reliability constraints with a theoretical guarantee (Proposition 1). There are no other bandit optimization approaches known  to be able to handle general objective functions with safety constraints. Relevant works here include the analysis of the performance of the UCB algorithm in the linear MAB setting with linear safety-constraints \cite{amani2019linear}, and well as linear TS with linear constraints  \cite{moradipari2019safe}.  In the latter work, it is shown that in the linear case, the presence of linear constraints do not negatively affect the regret performance of TS, which is remarkable and could be a preliminary justification as to why TS performs well in our paper in the presence of general (non-linear) cost and constraint functions.}

Simply put, the integral characteristic of Thompson sampling is that the algorithm's knowledge on day $\tau$ of the unknown parameter $\boldsymbol{\theta}_{}^{\star}$ is represented by the prior distribution $\pi_{\tau-1}$. Each day the algorithm samples $\boldsymbol{\widetilde{\theta}}_{\tau}$ from the prior distribution, and selects an action assuming that the sampled parameter is the true parameter. The algorithm then makes an observation dependent on the chosen action and the hidden parameter and performs a Bayesian update on the parameter's distribution $\pi_{\tau}$ based on the new observation. Because TS samples parameters from the prior distribution, the algorithm has a chance to explore (i.e., draw new parameters) and can exploit (i.e., draw parameters that are likely to be the true parameter) through out the run of the algorithm.

\subsection{Constrained Thompson Sampling}
\label{subsection: Con-TS-RTP}

In this section, we present the MAB heuristic titled Con-TS-RTP adopted to the electricity pricing problem. Con-TS-RTP is a modified Thompson sampling algorithm where the daily optimization problem is subject to constraints (standard TS algorithms do not have constraints)\cite{conTS2019}.

\textcolor{black}{When initializing $\pi_0$, the initial distribution on the customers’ unknown parameter can be selected by the aggregator. If the aggregator has access to prior information regarding the true parameter, then they could initialize the prior as a distribution of their choice. However, if the aggregator has no prior knowledge, a uniform distribution among all available parameters may be used to model the lack of knowledge of the aggregator.}

Each day, the algorithm observes the daily target profile $\mathbf{V}_{\tau}$, draws a parameter $\boldsymbol{\widetilde{\theta}}_{\tau}$ from the prior distribution, broadcasts a price signal to the customers, observes the load profile of the population in response to the broadcasted price, and then performs a Bayesian update on the parameter's distribution $\pi_{\tau}$ based on the new observation. \textcolor{black}{We note that there are no restrictions on the class of optimization problem to be solved each day; however, in order for our regret guarantee to hold, the aggregator\textit{ must }be able to find the globally optimal solution and can use any desired solution method to do so. In our experimental examples, we assume that  $\theta$'s and $\mathbf{p}_{\tau}$'s are chosen from discrete sets in order to be able to guarantee that an enumeration method could solve for the globally optimal price signals each day in spite of non-convexities that arise. }

The observation on day $\tau$ is denoted as $\mathbf{Y}_{\tau} = \mathbf{D}_{\tau}^{\star}(\mathbf{p}_{\tau})$ and we assume that each $\mathbf{Y}_{\tau}$ comes from the observation space $\mathcal{Y}$ that is known a priori. When performing the Bayesian update, the algorithm makes use of the following \textit{likelihood function}: $\ell(\mathbf{Y}_{\tau};\mathbf{p},\boldsymbol{\theta})=\mathbb{P}_{\theta}(\mathbf{D}_{\tau}^{\star}(\mathbf{p}_{\tau}) = \mathbf{Y}_{\tau}|\mathbf{p}_{\tau}=\mathbf{p})$. This function calculates the likelihood of observing a specific load profile when broadcasting price $\mathbf{p}$ and the true parameter is $\boldsymbol{\theta}$. The pseudocode for Con-TS-RTP applied to the constrained electricity pricing problem is presented in Algorithm \ref{alg: conTS}.
\begin{algorithm}[]
    \caption{\textsc{Con-TS-RTP}}
    \label{alg: conTS}
    \begin{algorithmic}
    \STATE \textbf{Input:} Parameter set $\Theta$; Price set $\mathcal{P}$; Observation set $\mathcal{Y}$; 
    Voltage constraints $u_{}^{min},u_{}^{max}$; Power flow constraint $S_{}^{max}$, Reliability metrics $\mu$, $\nu$
    \STATE \label{alg.init.pie}\textcolor{black}{\textbf{Initialize} $\pi_{0}$ based on aggregator's available prior knowledge of customer sensitivity.}
    \end{algorithmic}
    \begin{algorithmic}[1]
    \FOR{Day index $\tau=1...\mathcal{T}$}
    \STATE \textcolor{black}{Sample the daily hidden parameter $\boldsymbol{\widetilde{\theta}}_{\tau}\in\Theta$ from the aggregator's prior distribution $\pi_{\tau-1}$.}
    \STATE Observe the daily exogenous parameter $\mathbf{V}_{\tau}$.\\
    \STATE Broadcast the daily price signal:\\
    \vspace{-10pt}
    \begin{align*}
        &\hspace{-0pt}\hat{\mathbf{p}}_{\tau} =
        \argmin_{\mathcal{P}} \mathbb{E}_{\{\phi_c\}_{c\in\mathcal{C}}}\big[ f(\mathbf{D}_{\tau}^{}(\mathbf{p}_{\tau}), 
        \mathbf{V}_{\tau})|\boldsymbol{\theta}=\boldsymbol{\widetilde{\theta}}_{\tau}\big]\\
        &\text{\hspace{-0pt}Subject to:}\\
        &\text{\hspace{-20pt}\textbf{\textit{Constraint Set A}}:}\\
        &\hspace{-20pt}\begin{cases}
            \textit{\textbf{A.1: }} \mathbb{P}_{\{\phi_c\}_{c\in\mathcal{C}}}[u_{\tau}(t) \geq u_{}^{min} |\boldsymbol{\theta}=\boldsymbol{\widetilde{\theta}}_{\tau}
            ]\geq 1-\mu, \;\;\forall t\\
            \textit{\textbf{A.2: }}
            \mathbb{P}_{\{\phi_c\}_{c\in\mathcal{C}}}[u_{\tau}(t) \leq u_{}^{max}
            |\boldsymbol{\theta}=\boldsymbol{\widetilde{\theta}}_{\tau}
            ]\geq 1-\mu, \;\;\forall t\\
            \textit{\textbf{A.3: }}
            \mathbb{P}_{\{\phi_c\}_{c\in\mathcal{C}}}[f_{\tau}(t) \leq S_{}^{max}
            |\boldsymbol{\theta}=\boldsymbol{\widetilde{\theta}}_{\tau}
            ]\geq 1-\mu, \;\;\forall t\\
        \end{cases}\\
        &\text{\hspace{-20pt}\textbf{\textit{Constraint Set B}}:}\\
        &\hspace{-20pt}\begin{cases}
            \textit{\textbf{B.1: }} \mathbb{P}_{\{\phi_c\}_{c\in\mathcal{C}}}[u_{\tau}(t) \geq u_{}^{min} |\boldsymbol{\theta}\sim\pi_{\tau-1}
            ]\geq 1-\nu, \;\;\forall t\\
            \textit{\textbf{B.2: }}
            \mathbb{P}_{\{\phi_c\}_{c\in\mathcal{C}}}[u_{\tau}(t) \leq u_{}^{max}
            |\boldsymbol{\theta}\sim\pi_{\tau-1}
            ]\geq 1-\nu, \;\;\forall t\\
            \textit{\textbf{B.3: }}
            \mathbb{P}_{\{\phi_c\}_{c\in\mathcal{C}}}[f_{\tau}(t) \leq S_{}^{max}
            |\boldsymbol{\theta}\sim\pi_{\tau-1}
            ]\geq 1-\nu, \;\;\forall t\\
        \end{cases}
    \end{align*}
    \vspace{2pt}
    \STATE \textcolor{black}{Observe the population's load response to price $\mathbf{p}_{\tau}$: $\mathbf{Y}_{\tau} = \mathbf{D}_{\tau}^{\star}(\mathbf{p}_{\tau})$.}
    \STATE \textcolor{black}{Update the aggregator's knowledge of the true parameter in the posterior:}
    \begin{align*}
        &\forall S\subseteq \Theta: \pi_{\tau}(S) = \frac{\int_S \ell(\mathbf{Y}_{\tau};\hat{\mathbf{p}}_{\tau},\boldsymbol{\theta})\pi_{\tau-1}(d\boldsymbol{\theta})}{\int_{\Theta} \ell(\mathbf{Y}_{\tau};\hat{\mathbf{p}}_{\tau},\boldsymbol{\theta})\pi_{\tau-1}(d\boldsymbol{\theta})}
    \end{align*}
    \ENDFOR
    \end{algorithmic}
\end{algorithm}

\subsection{Discussion on Regret Performance of Con-TS-RTP}
\label{subsection: conTS Regret}

The regret analysis of Con-TS-RTP is inspired by the results in  \cite{gopalan2014thompson} for TS with nonlinear cost functions. The authors in \cite{drizzy} extended the regret results from \cite{gopalan2014thompson} by analyzing the effects of an objective function that is dependent on exogenous parameters such as $\mathbf{V}_{\tau}$. The analysis in the aforementioned papers provides bounds on the total number of times that suboptimal price signals selected by the algorithm over $\mathcal{T}$ days as specified in equation \eqref{eq: alt regret}. 
The regret guarantee we provide in this work extends the result further, allowing for constraints in the daily optimization that are dependent on the sampled $\boldsymbol{\widetilde{\theta}}_{\tau}$. As such, our regret guarantee applies to the Con-TS-RTP algorithm with constraints as formulated in \textit{Constraint Set A} in Algorithm \ref{alg: conTS}. We refer the reader to the appendix as well as \cite{drizzy} and \cite{gopalan2014thompson} for further discussion on the derivation of Theorem \ref{thm conTS}.
\begin{assumption}
\label{assump finite}
(Finitely many price signals, observations). $|\mathcal{P}|,|\mathcal{Y}|<\infty$.
\end{assumption}
\begin{assumption}
\label{assump grain of truth}
(Finite Prior,``Grain of truth") The prior distribution $\pi$ is supported over finitely many particles: $|\Theta|<\infty$. The true parameter exists within the parameter space: $\boldsymbol{\theta}_{}^{\star}\in\Theta$. The initial distribution $\pi_{0}$ has non-zero mass on the true parameter $\boldsymbol{\theta}^{\star}$ (i.e., $\mathbb{P}_{\pi_{0}}[\boldsymbol{\theta}^{\star}]>0$).
\end{assumption}
\begin{assumption}
\label{assump unique best action}
(Unique optimal price signal). There is a unique optimal price signal $\mathbf{p}^{\mathbf{V},\star}$ for each exogenous parameter $\mathbf{V}\in\mathcal{V}$.
\end{assumption}
\begin{theorem}
\label{thm conTS}
Under assumptions \ref{assump finite}-\ref{assump unique best action} and Constraint Set A in Algorithm \ref{alg: conTS}, for $\delta, \epsilon \in (0,1)$, there exists $\mathcal{T}^{\star}\geq0$ s.t. for all $\mathcal{T}\geq\mathcal{T}^{\star}$, with probability $1-\delta$:\\
\begin{align}
    \sum_{\mathbf{V}\in\mathcal{V}}
    \sum_{\mathbf{p}\in\{\mathcal{P}\setminus\mathbf{p}^{\mathbf{V},\star}\}}
    N_{\mathcal{T}}(\mathbf{p},\mathbf{V})\leq B + C(\log\mathcal{T}),
\end{align}
\vspace{1ex}

\noindent where $B\equiv B(\delta,\epsilon,\mathcal{P},\mathcal{Y},\Theta)$ is a problem-dependent constant that does not depend on $\mathcal{T}$, and $C(\log\mathcal{T})$ depends on $\mathcal{T}$, the sequence of selected price signals, and the Kullback-Leibler divergence properties of the bandit problem (i.e., the marginal Kullback-Leibler divergences of the observation distributions $\text{KL}\big[\ell(\mathbf{Y};\mathbf{p},\boldsymbol{\theta}^{\star}), \ell(\mathbf{Y};\mathbf{p},\boldsymbol{\theta})\big]$ (The complete description of the $C(\log\mathcal{T})$ term is left to the appendix).
\end{theorem}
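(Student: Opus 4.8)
The plan is to recognize that \textsc{Con-TS-RTP} under \textit{Constraint Set A} is an instance of posterior sampling with a \emph{deterministic} action-selection map, and then invoke the regret machinery of \cite{gopalan2014thompson} as extended to exogenous contexts in \cite{drizzy}. Concretely, since the chance constraints \textit{A.1}--\textit{A.3} are evaluated at the sampled parameter $\boldsymbol{\widetilde{\theta}}_{\tau}$, the feasible set $\mathcal{P}_A(\boldsymbol{\widetilde{\theta}}_{\tau},\mathbf{V}_{\tau})\subseteq\mathcal{P}$ is a deterministic function of $(\boldsymbol{\widetilde{\theta}}_{\tau},\mathbf{V}_{\tau})$, and so is the conditional expected cost $\mathbb{E}[f(\mathbf{D}_{\tau}(\mathbf{p}),\mathbf{V}_{\tau})\mid\boldsymbol{\theta}=\boldsymbol{\widetilde{\theta}}_{\tau}]$. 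Hence the broadcast price is $\hat{\mathbf{p}}_{\tau}=\psi(\boldsymbol{\widetilde{\theta}}_{\tau},\mathbf{V}_{\tau})$ for a fixed map $\psi$. The first step is to verify the single structural property the analysis needs: when $\boldsymbol{\widetilde{\theta}}_{\tau}=\boldsymbol{\theta}^{\star}$, \textit{Constraint Set A} collapses to the true chance constraints in \eqref{NOTclairvoyant_constraint}, so by Assumption \ref{assump unique best action} we have $\psi(\boldsymbol{\theta}^{\star},\mathbf{V})=\mathbf{p}^{\mathbf{V},\star}$. Because $\mathcal{V}$ is finite with each outcome of positive probability, it then suffices to bound $\sum_{\mathbf{p}\neq\mathbf{p}^{\mathbf{V},\star}}N_{\mathcal{T}}(\mathbf{p},\mathbf{V})$ separately for each fixed $\mathbf{V}$.

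Second, I would establish posterior concentration on $\boldsymbol{\theta}^{\star}$. Using the grain-of-truth portion of Assumption \ref{assump grain of truth} ($\mathbb{P}_{\pi_{0}}[\boldsymbol{\theta}^{\star}]>0$ and $|\Theta|<\infty$), I would track the log-likelihood ratio between $\boldsymbol{\theta}^{\star}$ and each competing particle $\boldsymbol{\theta}\in\Theta$ along the played prices. Since the increment $\log\ell(\mathbf{Y}_{\tau};\hat{\mathbf{p}}_{\tau},\boldsymbol{\theta}^{\star})-\log\ell(\mathbf{Y}_{\tau};\hat{\mathbf{p}}_{\tau},\boldsymbol{\theta})$ has conditional mean equal to the marginal divergence $\text{KL}[\ell(\mathbf{Y};\hat{\mathbf{p}}_{\tau},\boldsymbol{\theta}^{\star}),\ell(\mathbf{Y};\hat{\mathbf{p}}_{\tau},\boldsymbol{\theta})]$, a martingale (Azuma--Hoeffding / self-normalized) concentration bound shows that, off an event of probability at most $\delta$, the posterior mass $\pi_{\tau}(\boldsymbol{\theta})$ of any particle decays exponentially in the accumulated divergence generated by prices at which $\boldsymbol{\theta}$ is distinguishable from $\boldsymbol{\theta}^{\star}$. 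Finiteness of $\mathcal{P}$ and $\mathcal{Y}$ (Assumption \ref{assump finite}) keeps the increments bounded, which is precisely what makes these estimates uniform over the finitely many particles of $\Theta$.

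Third, I would carry out the suboptimal-play counting, which is where the constraints enter nontrivially. The only particles that can drive $\psi(\cdot,\mathbf{V})$ to return a price $\mathbf{p}\neq\mathbf{p}^{\mathbf{V},\star}$ are the ``confusing'' $\boldsymbol{\theta}$ whose constrained optimum differs from that of $\boldsymbol{\theta}^{\star}$. For each such $\boldsymbol{\theta}$ I would argue that sampling it and playing $\psi(\boldsymbol{\theta},\mathbf{V})$ either (i) produces a price at which $\boldsymbol{\theta}$ and $\boldsymbol{\theta}^{\star}$ are observationally distinguishable, injecting KL information that shrinks $\pi_{\tau}(\boldsymbol{\theta})$, or (ii) produces a price observationally equivalent to $\boldsymbol{\theta}^{\star}$ on all played actions, in which case its survival is harmless only if its selected price coincides with $\mathbf{p}^{\mathbf{V},\star}$. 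The delicate point is that the parameter-dependent feasible region can render a confusing particle's optimum feasible even while $\boldsymbol{\theta}^{\star}$'s true optimum is not yet certified feasible under the current sample; but because the feasibility test in \textit{Constraint Set A} is a deterministic function of the \emph{sampled} particle and not of the random load observation $\mathbf{Y}_{\tau}$, conditioning on $\boldsymbol{\widetilde{\theta}}_{\tau}$ fixes the feasible set, and the decoupling argument of \cite{gopalan2014thompson} applies verbatim to the augmented decision problem in which feasibility is folded into the per-parameter optimal action. This yields that each confusing particle is sampled-and-played at most $O(\log\mathcal{T})$ times before its posterior mass becomes negligible.

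Finally, I would assemble the bound: summing the per-particle logarithmic play counts over the finitely many confusing particles and the finitely many contexts $\mathbf{V}\in\mathcal{V}$, and optimizing the allocation of distinguishing plays against the marginal divergences $\text{KL}[\ell(\mathbf{Y};\mathbf{p},\boldsymbol{\theta}^{\star}),\ell(\mathbf{Y};\mathbf{p},\boldsymbol{\theta})]$, produces the leading $C(\log\mathcal{T})$ term. The pre-concentration plays, the $\delta$-probability failure event of the concentration step, and the $\epsilon$-slack separating near-optimal costs are all independent of $\mathcal{T}$ and are absorbed into $B(\delta,\epsilon,\mathcal{P},\mathcal{Y},\Theta)$, with $\mathcal{T}^{\star}$ chosen large enough that the logarithmic term dominates. \textbf{I expect the main obstacle to be step three}: showing rigorously that the $\boldsymbol{\widetilde{\theta}}_{\tau}$-dependent feasible region cannot let a confusing particle persistently mask $\mathbf{p}^{\mathbf{V},\star}$ without generating the observations needed to eliminate it, i.e., that the constraints do not create a ``trapping'' parameter that is simultaneously observationally indistinguishable from $\boldsymbol{\theta}^{\star}$ and constraint-optimal at a different price.
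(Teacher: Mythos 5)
Your proposal is correct and follows essentially the same route as the paper: the paper's own proof is simply the observation in your first paragraph — that \textit{Constraint Set A} depends only on the sampled $\boldsymbol{\widetilde{\theta}}_{\tau}$, so it defines a deterministic per-parameter action map that still returns $\mathbf{p}^{\mathbf{V},\star}$ when $\boldsymbol{\theta}^{\star}$ is sampled, hence the analysis of \cite{drizzy} and \cite{gopalan2014thompson} applies unchanged. The posterior-concentration and confusing-particle machinery you spell out in steps two and three (including the ``trapping parameter'' subtlety you flag) is exactly the content the paper delegates to those cited references rather than re-deriving.
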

\noindent\textit{Proof.} The proof is in the appendix.



In the next section, we discuss the distribution system reliability issue that arises from how the Con-TS-RTP algorithm handles the distribution system constraints (i.e., \textit{Constraint Set A}) and a modification to the Con-TS-RTP algorithm to ensure the constraints are enforced (i.e., \textit{Constraint Set B}).

\subsection{Con-TS-RTP with Improved Reliability Constraints}
\label{subsection: Thompson Sampling 2}

In order for the aggregator to ensure safe operation of the distribution grid while running the Con-TS-RTP algorithm, the reliability constraints need to hold for the true price response model $\boldsymbol{\theta}^{\star}$ each day. However, with the constraints formulated as in Algorithm \ref{alg: conTS}'s \textit{Constraint Set A}, the distribution system constraints are only enforced for the sampled $\boldsymbol{\widetilde{\theta}}_{\tau}$ and not necessarily the true parameter $\boldsymbol{\theta}^{\star}$. This entails that the distributions $\{\phi_c\}_{c\in\mathcal{C}}$ are parameterized by the sampled $\boldsymbol{\widetilde{\theta}}_{\tau}$; therefore, they are inaccurate if any parameter $\boldsymbol{\widetilde{\theta}}_{\tau}\neq\boldsymbol{\theta}^{\star}$ is sampled. This could potentially lead to many constraint violations throughout the run of the algorithm resulting in inadequate service for the customers and grid failures.

Due to the importance of reliable operation of the distribution system, we present a modification to the Con-TS-RTP algorithm (i.e., replacing \textit{Constraint Set A} with \textit{Constraint Set B} in Algorithm \ref{alg: conTS}) to increase the reliability of the selected prices and resulting load profiles with respect to the grid constraints. 
Specifically, we propose alternate constraints that depend on the algorithm's current knowledge of the true parameter, instead of the sampled parameter. In other words, instead of depending on $\boldsymbol{\widetilde{\theta}}_{\tau}$, the proposed alternate constraints depend on the prior distributions $\pi_{\tau-1}$ as follows:
\begin{align}
    &\label{prob1}\mathbb{P}_{\{\phi_c\}_{c\in\mathcal{C}}}[u_{\tau}(t) \geq u_{}^{min} |\boldsymbol{\theta}\sim\pi_{\tau-1}
            ]\geq 1-\nu, \;\;\forall t\\
    &\label{prob2}\mathbb{P}_{\{\phi_c\}_{c\in\mathcal{C}}}[u_{\tau}(t) \leq u_{}^{max}
            |\boldsymbol{\theta}\sim\pi_{\tau-1}
            ]\geq 1-\nu, \;\;\forall t\\
    &\label{prob3}\mathbb{P}_{\{\phi_c\}_{c\in\mathcal{C}}}[f_{\tau}(t) \leq S_{}^{max}
            |\boldsymbol{\theta}\sim\pi_{\tau-1}
            ]\geq 1-\nu, \;\;\forall t
\end{align}
where $\nu$ is a small constant (detailed in Proposition \ref{prop}). 
When considering constraints \eqref{prob1}-\eqref{prob3} in the Con-TS-RTP algorithm, the algorithm will select more conservative price signals each day that can guarantee the distribution system's constraints are met with high probability by using the information in the updated prior distributions. Before analyzing the modified algorithm's reliability, we make the following assumption:
\begin{assumption}
\label{assump 2}
There exists $\xi^{\star} > 0$ and $\lambda\geq0$, such that for all $\boldsymbol{\theta} \neq \boldsymbol{\theta}_{}^{\star}, \text{ KL}\big[\ell(\mathbf{Y};\mathbf{p},\boldsymbol{\theta}^{\star}), \ell(\mathbf{Y};\mathbf{p},\boldsymbol{\theta})\big] \geq \xi^{\star}$, where
\begin{align*}
    \nonumber\xi^{\star}_{\boldsymbol{\;\theta},\mathbf{p}}=\max_{x\in\mathbb{Z}^{>0}}\Bigg\{
    \frac{-\lambda}{x}-\frac{4}{\sqrt{x}}\sqrt{\frac{\log{|\mathcal{Y}||\mathcal{P}|}}{\delta}+\frac{\log{x}}{2}}
    \\\times\sum_{\mathbf{Y}\in\mathcal{Y}}\Big|\log{\frac{\ell(\mathbf{Y};\mathbf{p},\boldsymbol{\theta}^{\star})}{\ell(\mathbf{Y};\mathbf{p},\boldsymbol{\theta})}}\Big|\Bigg\}
\end{align*}
and
\begin{align*}
    \xi^{\star}=\max_{\boldsymbol{\theta}\in\Theta,\mathbf{p}\in\mathcal{P}}\xi^{\star}_{\boldsymbol{\;\theta},\mathbf{p}}.
\end{align*}
\end{assumption}
\noindent
Assumption \ref{assump 2} ensures that as the aggregator performs the steps in Algorithm 1, the algorithm's Bayesian updates of the prior distribution $\pi_{\tau}$ will likely never decrease the mass of the true parameter $\boldsymbol{\theta}^{\star}$ below a certain threshold. Specifically, with Assumption \ref{assump 2}, it can be shown (as in \cite{gopalan2014thompson}) that with probability $1-\delta\sqrt{2}$ the following holds for all $\tau\geq1$:
\begin{align}
\label{eqn: minimum mass}
    \pi_{\tau}(\boldsymbol{\theta}^{\star})
    \geq
    \pi_{0}(\boldsymbol{\theta}^{\star})
    e^{-\lambda|\mathcal{P}|},
\end{align}
where $\lambda\geq0$ is a chosen parameter (from Assumption \ref{assump 2}) that dictates the minimum reachable mass of the true parameter via Bayesian updating. With the modified constraints \eqref{prob1}-\eqref{prob3} and the minimum mass of the true parameter specified in \eqref{eqn: minimum mass}, the reliability of Con-TS-RTP can be characterized as follows:
\begin{proposition}
\label{prop}
Under assumptions \ref{assump finite}-\ref{assump 2}, with $\nu$ in equations \eqref{prob1}-\eqref{prob3} chosen such that $\nu\leq\mu\pi_{0}(\boldsymbol{\theta}^{\star})
e^{-\lambda|\mathcal{P}|}$,  with probability $1-\delta\sqrt{2}$, the Con-TS-RTP algorithm with Constraint Set B will uphold the probabilistic distribution system constraints as formulated in \eqref{NOTclairvoyant_constraint} for each day $\tau=1,\dots,\mathcal{T}$. 
\end{proposition}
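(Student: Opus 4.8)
The plan is to transfer the prior-averaged feasibility that Constraint Set B enforces onto the true parameter $\boldsymbol{\theta}^\star$, exploiting the fact (established from Assumption \ref{assump 2} via the argument of \cite{gopalan2014thompson}) that the posterior mass on $\boldsymbol{\theta}^\star$ never collapses below a fixed level. First I would condition on the high-probability event $\mathcal{E}$ on which the minimum-mass bound \eqref{eqn: minimum mass}, $\pi_{\tau-1}(\boldsymbol{\theta}^\star)\geq\pi_{0}(\boldsymbol{\theta}^\star)e^{-\lambda|\mathcal{P}|}$, holds simultaneously for all $\tau\geq1$; the cited result gives $\mathbb{P}[\mathcal{E}]\geq 1-\delta\sqrt{2}$. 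The point to stress is that constraints B.1--B.3 are enforced \emph{deterministically} by the algorithm when it selects $\hat{\mathbf{p}}_\tau$, so the only randomness that must be controlled is whether $\mathcal{E}$ occurs; once I restrict attention to $\mathcal{E}$, the claimed guarantee becomes a deterministic consequence for every day.

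Next I would fix a day $\tau$, a single constraint (say the voltage lower bound), and a time $t$, and rewrite the prior-averaged probability appearing in B.1 as a mixture over $\boldsymbol{\theta}$:
\begin{align*}
1-\nu \;\leq\; \mathbb{P}_{\{\phi_c\}}\big[u_\tau(t)\geq u^{min}\mid\boldsymbol{\theta}\sim\pi_{\tau-1}\big] = \int_{\Theta}\mathbb{P}_{\{\phi_c\}}\big[u_\tau(t)\geq u^{min}\mid\boldsymbol{\theta}\big]\,\pi_{\tau-1}(d\boldsymbol{\theta}).
\end{align*}
The key step is to split the integral into the atom at $\boldsymbol{\theta}^\star$ and the remainder, and to bound the remaining integrand crudely by $1$, so that the contamination from all $\boldsymbol{\theta}\neq\boldsymbol{\theta}^\star$ is at most $1-\pi_{\tau-1}(\boldsymbol{\theta}^\star)$:
\begin{align*}
1-\nu \;\leq\; \pi_{\tau-1}(\boldsymbol{\theta}^\star)\,\mathbb{P}_{\{\phi_c\}}\big[u_\tau(t)\geq u^{min}\mid\boldsymbol{\theta}^\star\big] + \big(1-\pi_{\tau-1}(\boldsymbol{\theta}^\star)\big).
\end{align*}
Rearranging and dividing by $\pi_{\tau-1}(\boldsymbol{\theta}^\star)>0$ isolates the true-parameter probability, $\mathbb{P}_{\{\phi_c\}}[u_\tau(t)\geq u^{min}\mid\boldsymbol{\theta}^\star]\geq 1-\nu/\pi_{\tau-1}(\boldsymbol{\theta}^\star)$.

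Finally I would substitute the minimum-mass bound and the chosen $\nu$: on $\mathcal{E}$ we have $\nu/\pi_{\tau-1}(\boldsymbol{\theta}^\star)\leq \nu/\big(\pi_{0}(\boldsymbol{\theta}^\star)e^{-\lambda|\mathcal{P}|}\big)\leq\mu$, the last inequality being exactly the hypothesis $\nu\leq\mu\pi_{0}(\boldsymbol{\theta}^\star)e^{-\lambda|\mathcal{P}|}$. This yields $\mathbb{P}_{\{\phi_c\}}[u_\tau(t)\geq u^{min}\mid\boldsymbol{\theta}^\star]\geq1-\mu$, which is precisely the reliability requirement \eqref{NOTclairvoyant_constraint} for that constraint. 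The identical argument applies verbatim to B.2 and B.3, to every time $t$, and to every day $\tau$, so on $\mathcal{E}$ all constraints in \eqref{NOTclairvoyant_constraint} hold throughout $\tau=1,\dots,\mathcal{T}$; since $\mathbb{P}[\mathcal{E}]\geq1-\delta\sqrt{2}$, the proposition follows.

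I expect the main obstacle to be conceptual rather than computational: recognizing that the prior-averaged constraint can be ``de-mixed'' so that the worst-case contribution from parameters $\boldsymbol{\theta}\neq\boldsymbol{\theta}^\star$ is exactly $1-\pi_{\tau-1}(\boldsymbol{\theta}^\star)$, and that the multiplicative slack needed to absorb it is governed by the non-vanishing posterior mass on $\boldsymbol{\theta}^\star$ guaranteed by \eqref{eqn: minimum mass}. A secondary point to handle carefully is that all three constraints, all times $t$, and all days $\tau$ must be covered by the \emph{single} event $\mathcal{E}$ rather than by separate union bounds; this is legitimate because B.1--B.3 are enforced deterministically and \eqref{eqn: minimum mass} already holds uniformly over $\tau$. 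I would also record the standing assumption that the daily optimization remains feasible under Constraint Set B.
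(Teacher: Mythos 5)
Your proposal is correct and follows essentially the same route as the paper's proof: both decompose the prior-averaged constraint into the atom at $\boldsymbol{\theta}^{\star}$ plus the remaining mass bounded by one, invoke the minimum-mass bound $\pi_{\tau}(\boldsymbol{\theta}^{\star})\geq\pi_{0}(\boldsymbol{\theta}^{\star})e^{-\lambda|\mathcal{P}|}$ from \cite{gopalan2014thompson}, and conclude that $\nu\leq\mu\pi_{0}(\boldsymbol{\theta}^{\star})e^{-\lambda|\mathcal{P}|}$ forces the true-parameter constraint probability above $1-\mu$. Your version merely runs the same algebra in the forward inequality direction (isolating $\mathbb{P}[\cdot\mid\boldsymbol{\theta}^{\star}]\geq1-\nu/\pi_{\tau-1}(\boldsymbol{\theta}^{\star})$) rather than the paper's ``set equal and solve for $\nu$'' presentation, which is a cosmetic rather than substantive difference.
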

\noindent\textit{Proof.} The proof is in the appendix.\\ 
\textcolor{black}{\noindent\textit{Remark:} The novelty of Con-TS-RTP is that we can ensure with high probability an unsafe price signal is never selected. We can tune the safety parameter to determine what level of risk is acceptable to the aggregator. We note that the selection of an unsafe price signal has no effect on the learning capability of the algorithm. The Con-TS-RTP algorithm will learn regardless of safe/unsafe price signals. The algorithm will never crash/stop prematurely due to the selection of an unsafe price signal; however, the local distribution grid might surpass safety limits on transformers or line flow limits due to an unsafe price selection on select very limited days, at which points protective measures (e.g., relays)  should be used to ensure physical grid safety. We note that this is natural for any learning algorithm dealing with stochastic conditions and unknown system parameters. Contingencies can never be avoided 100\%, similar to other grid operation paradigms that deal with uncertain conditions (e.g., wholesale market dispatch with renewables or possible transmission system contingencies). They could only be avoided with a certain high probability when making dispatch decisions. However, it is understood that other protective measures should always be put in place to avoid physical system damage in case of contingencies.}

\section{Experimental Evaluation}

\label{section: simulation}
\subsection{Test Setup: Radial Distribution System}
\label{subsection: Radial Distribution System}


In this section we describe the power distribution system and the corresponding network parameters for the test case. We consider an actual radial distribution system from the ComEd service territory in Illinois, USA (adopted from \cite{andrianesis2019locational} and shown in Fig. \ref{fig: radial_dist_fig}) represented by the undirected graph $\mathcal{G}$ which includes a set of nodes (vertices) $\mathcal{N}$ and a set of power lines (edges) $\mathcal{L}$. 
\begin{figure}[]
    \centering
    \includegraphics[width=0.8\columnwidth]{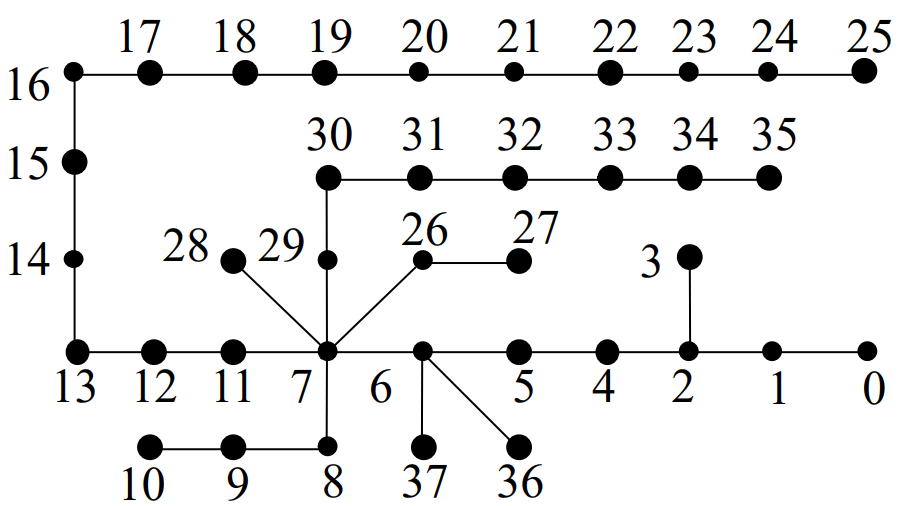}
    \caption{Radial distribution system.}
    \label{fig: radial_dist_fig}
\end{figure}
In this work, we consider each node as one population with its own daily load profile; however, each node could be an aggregation of smaller entities downstream of the local distribution connection point. 
The undirected graph is organized as a tree, with the root node representing the distribution system's substation where it is connected to the regional transmission system. We denote $N$ as the total number of nodes in the network excluding the root node. The nodes are indexed as $i=0,\dots,N$, and the node corresponding to $i=0$ (i.e., the root node) is the substation. The power lines are indexed by $i=1,\dots,N$ where the $i$-th line is directly upstream of node $i$ (i.e., line $i$ feeds directly to node $i$). In the following, we denote the parent vertex of node $i$ as $\mathcal{A}_i$ and the set of children vertices of node $i$ as $\mathcal{K}_i$.

Furthermore, we assume the aggregator has access to measurement data at each node's local connection point. Specifically, the aggregator measures the active power demands at each node $i$ at time $t$ on day $\tau$ denoted as $d_{i,\tau}^P(t)$. In order to ensure the delivered power is suitable for the electricity customers, the aggregator also monitors node $i$'s local voltage at time $t$ on day $\tau$ denoted as $v_{i,\tau}(t)$. 
In the following, we denote the active power daily load profile of node $i$ on day $\tau$ as $\mathbf{D}_{i,\tau}^P = [d_{i,\tau}^P(t)]_{t=1,\dots,T}$. Additionally, the aggregator records the active power flows $f_{i,\tau}^P(t)$ on each line $i\in\mathcal{L}$. \textcolor{black}{We note that reactive power should also be monitored in distribution systems, even though it is generally not priced and customers do not consider it in determining their optimal load response to prices. As such, we use the superscript $Q$ for the reactive power at a node, $d_{i,\tau}^Q(t)$, and for reactive power flow on a line, $f_{i,\tau}^Q(t)$.} Each line in the distribution system has its own internal resistance denoted as $R_i$, reactance denoted as $X_i$, and power limit denoted as $S_i^{max}$. The parameters for the distribution system are listed in Table \ref{Dist_table}.
\begin{table}
\begin{center}
\begin{footnotesize}
 \begin{tabular}{||p{1.5ex} p{6ex} p{6ex} p{5.5ex}|p{1.2ex} p{6ex} p{6ex} p{5.5ex}||} 
 \hline
 Line & \hfil R & \hfil X & S$^{max}$ & Line & \hfil R & \hfil X & S$^{max}$\\ [0.2ex] 
 & ($10^{-3} \Omega$) \hspace{-20ex}&($10^{-3} \Omega$) &(KVA) & & ($10^{-3} \Omega$) &($10^{-3} \Omega$) &(KVA)\\
 \hline\hline
 1 & 24.2 & 48.2 & 54 & 20 & 129.5 & 30.9 & 10.8\\ 
 \hline
 2 & 227.3 & 743.5 & 84 & 21 & 15.1 & 5.4 & 14.4 \\
 \hline
 3 & 76.3 & 18.2 & 10.8 & 22 & 50.8 & 12.1 & 10.8\\
 \hline
 4 & 43.6 & 142.7 & 84 & 23 & 69.1 & 16.5 & 10.8 \\
 \hline
 5 & 25.8 & 84.4 & 84 & 24 & 31.6 & 11.2 & 14.4\\
 \hline
 6 & 10.5 & 10.7 & 40.2 & 25 & 96.3 & 23 & 10.8\\ 
 \hline
 7 & 23.2 & 23.6 & 40.2 & 26 & 110.7 & 112.6 & 40.2\\ 
 \hline
 8 & 75.1 & 26.7 & 14.4 & 27 & 2.1 & 0.7 & 14.4\\ 
 \hline
 9 & 114.4 & 27.3 & 10.8 & 28 & 242.1 & 86.2 & 14.4\\ 
 \hline
 10 & 110.8.3 & 67.7 & 14.4 & 29 & 27.3 & 27.8 & 40.2\\ 
 \hline
 11 & 63.7 & 22.7 & 14.4 & 30 & 174.6 & 62.1 & 16.2\\ 
 \hline
 12 & 278.7 & 99.2 & 14.4 & 31 & 43 & 15.3 & 10.8\\ 
 \hline
 13 & 254.2 & 10.8.5 & 14.4 & 32 & 207.8 & 74 & 10.8\\ 
 \hline
 14 & 21.8 & 5.2 & 10.8 & 33 & 109.4 & 38.9 & 14.4\\ 
 \hline
 15 & 57.3 & 20.4 & 14.4 & 34 & 50.5 & 18 & 14.4\\ 
 \hline
 16 & 126.7 & 45.1 & 14.4 & 35 & 165.2 & 58.8 & 14.4\\ 
 \hline
 17 & 48.6 & 11.6 & 10.8 & 36 & 49.5 & 17.6 & 14.4\\ 
 \hline
 18 & 95.1 & 22.7 & 10.8 & 37 & 5.8 & 2.1 & 14.4\\ 
 \hline
 19 & 137.3 & 32.8 & 10.8 & & & & \\ [0.2ex] 
 \hline
\end{tabular}
\caption{Distribution system parameters.}
\label{Dist_table}
\end{footnotesize}
\end{center}
\end{table}

\subsection{Power Flow Model}
\label{subsection: Power Flow Model}
In order to solve for the power flow and nodal voltages of the power distribution system, we make use of the \textit{LinDistFlow} model\cite{LinDistFlow_Original}, which is a linear approximation for the AC power flow model\footnote{\textcolor{black}{The reader should note that the proposed learning approach is not limited to the \textit{LinDistFlow} model. There are other power flow models that can be utilized such as \cite{sankur2016linearized}.}}. The \textit{LinDistFlow} model has been extensively studied and verified to be competitive to the nonlinear AC flow model on many realistic feeder topologies including radial \cite{liu2017decentralized_thesis, zhu2016fast_lindistflow_compare, vsulc2014optimal_lindistflow_compare, farivar2013equilibrium_lindistflow_compare}. The \textit{LinDistFlow} model reduces computational complexity by making use of the following linear power flow and voltage equations:

\small\begin{align}
    \label{eqn: active_flow}
    d_{i,\tau}^P(t)  + \sum_{j\in\mathcal{K}_i} f_{j,\tau}^P(t) = f_{\mathcal{A}_i,\tau}^P(t); \;\;&\forall t,\tau,  i,\\
    \label{eqn: reactive_flow}
    d_{i,\tau}^Q(t)  + \sum_{j\in\mathcal{K}_i} f_{j,\tau}^Q(t) = f_{\mathcal{A}_i,\tau}^Q(t); \;\;&\forall t,\tau, i,\\
    \label{eqn: voltage_drop}
    u_{\mathcal{A}_i,\tau}(t) - 2\big(f_{i,\tau}^P(t)R_i + f_{i,\tau}^Q(t)X_i\big) = u_{i,\tau}(t); \;&\forall t,\tau, i
\end{align}

\normalsize
\noindent \textcolor{black}{where \eqref{eqn: active_flow} accounts for active power and \eqref{eqn: reactive_flow} accounts for reactive power.} In \eqref{eqn: voltage_drop} we make use of the operator $u_{i,\tau}(t) = \big(v_{i,\tau}(t)\big)^2$ to provide a linear voltage drop relationship across the distribution system. For the scope of this work, we assume that the substation connection to the regional transmission system (node $i=0$) is regulated and has a fixed voltage $v_{0,\tau}(t) = \textcolor{black}{12.5  \textrm{kV}}, \forall t,\tau$.

\subsection{Distribution System Operational Constraints}
\label{subsection: Network Constraints}
The nodal voltages and line flows calculated in \eqref{eqn: active_flow}-\eqref{eqn: voltage_drop} should obey the following constraints for reliable operation:
\begin{align}
    \label{eqn: voltage min const exp}
    u_{i,\tau}(t) &\geq u_i^{min}, &&\forall t,\tau,i\in\mathcal{N},\\
    \label{eqn: voltage max const exp}
    u_{i,\tau}(t) &\leq u_i^{max}, &&\forall t,\tau,i\in\mathcal{N},\\
    \label{eqn: power flow const exp}
    f_{i,\tau}^P(t)^2 +f_{i,\tau}^Q(t)^2&\leq (S_i^{max})^2, &&\forall t,\tau,i\in\mathcal{L},
\end{align}
where \eqref{eqn: voltage min const exp}-\eqref{eqn: voltage max const exp} are the nodal voltage constraints and \eqref{eqn: power flow const exp} corresponds to the power constraints for each distribution line.
\begin{figure*}[ht!]
    \centering
    \includegraphics[width=1.0\textwidth]{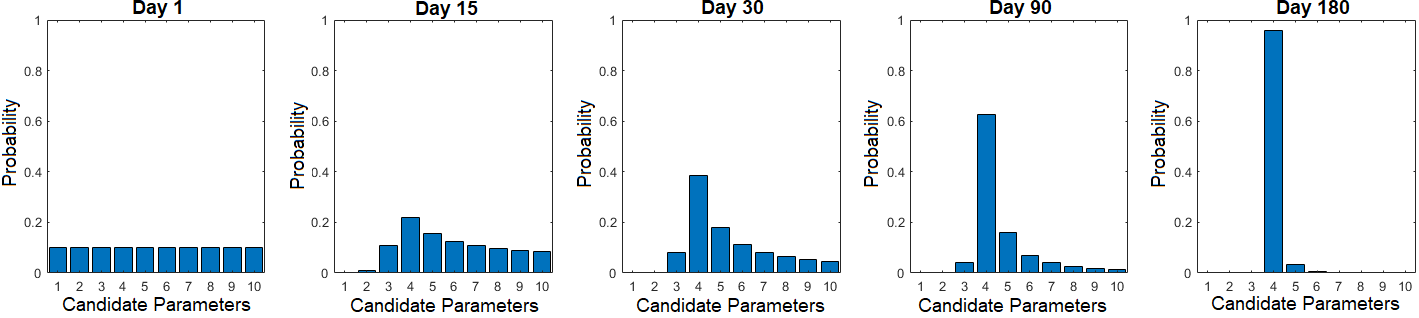}
    \caption{\textcolor{black}{The 5 plots above portray the evolution of the aggregator's knowledge of the population's hidden parameter at node 10 throughout the learning procedure. The true parameter is parameter 4. From left to right: Day 1 (initialized to uniform distribution, i.e., no knowledge of the true parameter), Day 15 prior, Day 30 prior, Day 90 prior, and Day 180 prior. At day 180, the aggregator is about 95\% certain that parameter 4 is the true parameter. }}
    \label{fig: prior}
\end{figure*}
\subsection{Load Model and Multi-armed Bandit Formulation}
\label{section initial sim}
In this test case, we consider 6 time slots each day, each 4 hours long. We consider 10 unique target load profile vectors, with the daily target profile $\mathbf{V}_{\tau}$ for day $\tau$ drawn from a uniform distribution each morning. Each of the 10 target load profile vectors corresponds to a desired load curve to accommodate different levels of forecasted renewable generation. Furthermore, the aggregator transmits daily price signals $\mathbf{p}_{i,\tau}$ to each node within the system. The aggregator has a high and low price for each of the 6 time slots resulting in $2^6$ possible daily price signals. Since the aggregator is shaping the electricity demand at each node within the distribution system, each node has its own cost $f\big(\mathbf{D}_{i,\tau}^{}(\mathbf{p}_{i,\tau}), \mathbf{V}_{\tau}\big)$ that is dependent on the node's daily demand and the target profile. \textcolor{black}{In this test case, we assume  the cost function for each node is the squared deviation of the node's electricity demand from the target profile: $f\big(\mathbf{D}_{i,\tau}^{}(\mathbf{p}_{i,\tau}), \mathbf{V}_{\tau}\big)=|\mathbf{D}_{i,\tau}^{}(\mathbf{p}_{i,\tau})- \mathbf{V}_{\tau}|^{2}$, thus equally penalizing over-usage and under-usage of electricity. We note that the units are KW$^2$ and if the aggregator had a converting function for the squared deviation (KW$^2$) to $\$$U.S.D., then we could calculate the monetary losses of the system. In our experimental examples, we make use of discrete sets for the available $\theta$'s and $\mathbf{p}_{\tau}$'s to guarantee that an enumeration-based method could solve for the globally optimal price signals each day in spite of problem non-convexities.}

We consider 20 unique load flexibility clusters in this test case. \textcolor{black}{Each cluster's parameters represent the varying start/stop times, total energy demands, and power limitations common to EV loads in residential areas and are of the form presented in equation \eqref{eqn: cluster_example}.} \textcolor{black}{We note that we generated the population’s load price response directly using the same clustering model (i.e., the actual load response in the simulation is at the level of 20 clusters and can be well represented by the 20 clusters \textit{plus additive noise}. For a discussion on the effects of poor clustering, we refer the reader to Section \ref{section: effects of clustering}).} Each node in the distribution system is comprised of these 20 load clusters with its own unique sensitivities $a_{i,c}(\mathbf{p}_\tau)$ for each cluster. Each sensitivity parameter is selected as $a_{i,c}(\mathbf{p}_{i,\tau}) \sim \mathcal{N}(\frac{\beta_c}{\boldsymbol{\theta}_i^{\star}\mathbf{p}_{i,\tau}},\sigma^2)$ each day where $\beta_c$ is a cluster specific constant known by the aggregator \textcolor{black}{(we note that $\beta_c$ represents \textit{a priori} knowledge of customers' preferences and could come from behavioral studies; however, our framework does not require this and $\beta_c$ can be completely omitted in cases where prior information is unavailable)}. Each node's price sensitivity, i.e., parameter to be learned, $\boldsymbol{\theta}_i^{\star}$, is a vector of length 6 and the set of possible parameters, $\Theta$, contains 10 unique vectors. \textcolor{black}{Unless noted, the reliability parameter chosen for the Con-TS-RTP algorithm is $\nu=0.1$.}

\textcolor{black}{\textit{Note on reactive power}: We note that reactive power is generally price insensitive; however, reactive power is present in a distribution system and affects the constraints of the system. Reactive power flows alter how the price sensitive loads are limited by the operational constraints of the system (i.e., active and reactive flows on lines affect the capacity available for the price responsive loads). Due to  the lack of data as to how much reactive power is present in the distribution system due to our appliance clusters and otherwise, for our numerical examples, we omit the inclusion of reactive power to only view the appliance clusters’ active load profiles within the distribution system. For further discussion on this, we refer the reader to papers that fully capture the effects of reactive power in such problems such as \cite{8600344} and \cite{8910409} in which the authors showcase techniques to handle distribution systems with chance constraints.} 

\textcolor{black}{In the following sections, without loss of generality, we assume that reactive power is not responsive to the pricing signals. We note that our proposed learning approach can accommodate reactive power flows (\textit{LinDistFlow} can as well); however, our goal was to show proof of concept of our learning/pricing approach with active customer loads, thus reactive power flow will be examined in future work.}

\subsection{Results}
\label{section: original sim}

We simulated the Con-TS-RTP algorithm for 365 days for an aggregator attempting to learn the sensitivities of the nodes in the system and shape their demands. In the following, we highlight the results of the simulation at node 10 of the radial distribution system. Figure \ref{fig: prior} presents the evolution of the prior distribution for node 10's hidden parameter. On day 1, the prior was initialized to a uniform distribution among the candidate parameters, and by day 180 the weight on the true parameter exceeded $0.95$.

Figure \ref{fig: regret} presents the regret performance of Con-TS-RTP at node 10. As seen in Figure \ref{fig: regret}, the regret curve flattens after day 130 as the algorithm never chooses a suboptimal price signal after this day. 
\begin{figure}[h]
    \centering
    \includegraphics[width=1.0\columnwidth]{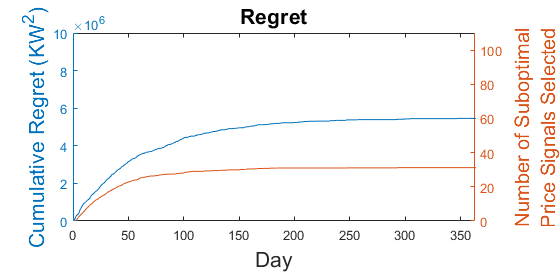}
    \caption{Regret performance of Con-TS-RTP at node 10 \textcolor{black}{with $\nu=0.1$. Note that the y-axis (left) units are KW$^2$ for the squared load deviation from the target profile.}}
    \label{fig: regret}
\end{figure}

Figure \ref{fig: deviation} presents node 10's deviation from a specific daily target profile. On days 2, 3, 4, 53, and 365 the same target profile (i.e.,  $\mathbf{V}_2=\mathbf{V}_3=\mathbf{V}_4=\mathbf{V}_{53}=\mathbf{V}_{365}$) was drawn and the aggregator selected different price signals to shape the node's demand. As seen in Fig. \ref{fig: deviation}, the deviation from the target profile on day 365 is less than the deviation on the other days as the algorithm has learned the true parameter and selects the optimal price signal to shape the load.
\begin{figure}[h]
    \centering
    \includegraphics[width=0.8\columnwidth]{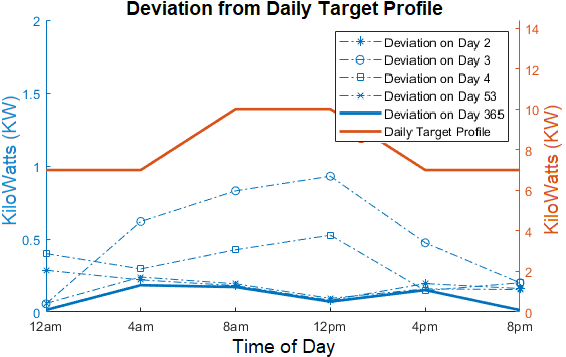}
    \caption{Deviation of node 10's demand from a specific daily target profile.}
    \label{fig: deviation}
\end{figure}

In Figure \ref{fig: violations}, we present the distribution system constraint violations that were avoided by using Con-TS-RTP instead of an unconstrained TS algorithm. Clearly, in the early learning stages, the unconstrained TS algorithm does not have accurate knowledge of the hidden parameters and violates the distribution system constraints often. Con-TS-RTP is more conservative with its exploration of untested price signals and avoids the constraint violations made by the unconstrained TS algorithm. Last, we note that the simulation was implemented in Matlab and CVX on an i7 processor with 16gb of RAM. The 365 day simulations were run in less than 5 minutes.

\begin{figure}[h]
    \centering
    \includegraphics[width=0.8\columnwidth]{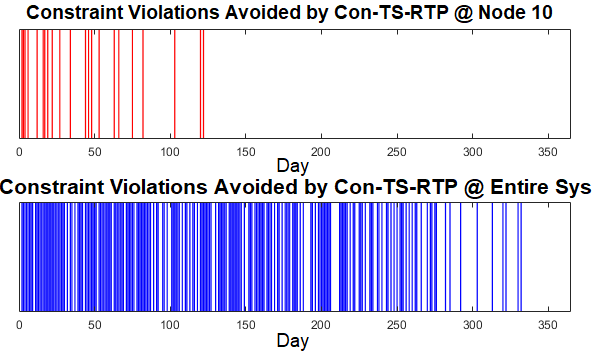}
    \caption{Top: Distribution system constraint violations at node 10 avoided by using Con-TS-RTP instead of an unconstrained TS. Bottom: Distribution system constraint violations across the entire system avoided by using Con-TS-RTP instead of an unconstrained TS.}
    \label{fig: violations}
\end{figure}

\textcolor{black}{
\subsection{Effects of Clustering}
\label{section: effects of clustering}
In this section, we portray the effects of selecting different numbers of clusters to represent a true load as well as the effects of selecting too few clusters on the performance of our Con-TS-RTP algorithm. First, in Figure \ref{fig: inaccurate clusters loads}, we  perform a simple demonstration. We considered a population of 100 EVs with random charging requests and then constructed clusters to view the accuracy of the clustered load profiles versus the actual load profile. As shown in Figure \ref{fig: inaccurate clusters loads}, using 1, 5, or 10 clusters to represent the EV population results in load profiles quite different from the actual; however, with 20 clusters, the load profile begins to match the actual profile. 
\begin{figure}[h]
    \centering
    \includegraphics[width=1.0\columnwidth]{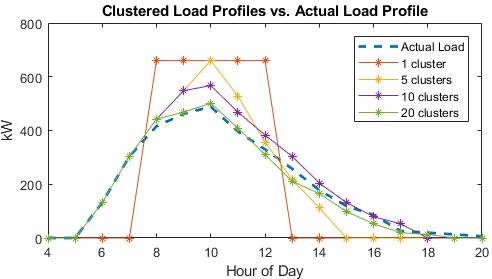}
    \caption{\textcolor{black}{Effects of changing the number of clusters to model an actual load. Specifically, load profiles for 4 cluster models compared to the actual load profile for a population of 100 charging EVs.}}
    \label{fig: inaccurate clusters loads}
\end{figure}
}

\textcolor{black}{
Furthermore, in Figure \ref{fig: inaccurate cluster regret} we show the effects of reducing the number of clusters in the load model on the regret performance of our Con-TS-RTP algorithm. Specifically, we focus on the same setup as Section \ref{section initial sim} with the exception that we have the Con-TS-RTP algorithm use a 10 cluster model instead of the 20 cluster model for the population to see the effects of an inaccurate cluster model. 
As shown in Figure \ref{fig: inaccurate cluster regret}, the regret curve for this case never flattens and the algorithm is never able to select the optimal price signal. This is because the algorithm's model of the load (i.e., the 10 clusters) is unable to accurately model the population’s response and causes the algorithm to select incorrect prices every day. 
\begin{figure}[]
    \centering
    \includegraphics[width=0.8\columnwidth]{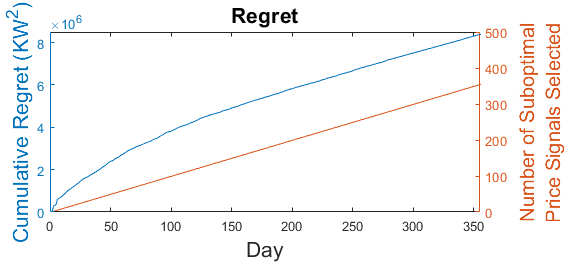}
    \caption{\textcolor{black}{Effects of using too few clusters for the population’s load model. We show the regret performance of Con-TS-RTP at node 10 with $\nu=0.1$ for a 10 cluster model instead of the 20 cluster model as previously shown in Fig. \ref{fig: regret}. Due to the inaccuracies of the 10 cluster model, the algorithm is never able to select the optimal price signals.}}
    \label{fig: inaccurate cluster regret}
\end{figure}
}

\textcolor{black}{
\subsection{Evolving Price Sensitivity}
\label{section: evolve}
In this section, we show an example of what happens when customers' sensitivities change over time and how a Bayesian learning approach can naturally adapt and account for these dynamic changes. Specifically, we simulated the same system as in Figure \ref{fig: regret}, but on day 250, we altered the true $\boldsymbol{\theta}^{\star}_i$ parameter. As seen in Figure \ref{fig: evolve}, the regret curves first flatten around day 125, then increase at day 250, and then flatten again near day 325. This shows that Con-TS-RTP was able to successfully learn the first and second true parameter without any modifications to the algorithm. The algorithm naturally shifts its belief about the true parameter as it observes outputs that do not match its current belief.
\begin{figure}[h]
    \centering
    \includegraphics[width=1.0\columnwidth]{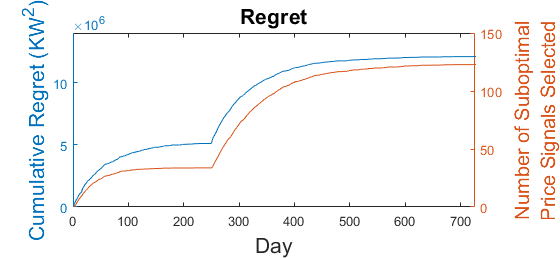}
    \caption{\textcolor{black}{Regret performance of Con-TS-RTP at node 10 with $\nu=0.1$. Note that on day 250, the hidden parameter was altered.}}
    \label{fig: evolve}
\end{figure}
}

\textcolor{black}{
\subsection{Non-repeating Target Profiles}
\label{section: evolve}
In the previous case study, we assumed a low number of target profiles (10 profiles) to satisfy the assumptions we have made for our theoretical results.
In this section, we demonstrate how extending the number of target profiles to 365 does not negatively affect the performance of the algorithm in practice. Furthermore, we ensure that once a target profile has been viewed by the aggregator, it is never drawn again. Thus, each day the aggregator is posting a price to shape the population's load to match a target profile that it has never seen before. As shown in Figure \ref{fig: exogenous_enlarge}, enlarging the set of target profiles does not slow down the learning process. Note that in Figure \ref{fig: exogenous_enlarge} the regret flattens near trial 100 which matches the duration of the learning period seen in Figure \ref{fig: regret} (i.e., in simulation, the aggregator is still able to learn the true parameter when the number of target profiles is increased from 10 to 365, resulting in similar regret curves).
\begin{figure}[h]
    \centering
    \includegraphics[width=\columnwidth]{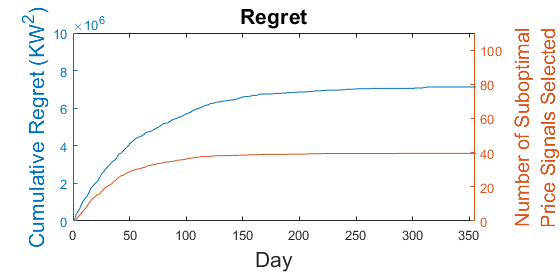}
    \caption{\textcolor{black}{Regret performance of Con-TS-RTP at node 10 with $\nu=0.1$. Note that on each day, the sampled $\mathbf{V}_{\tau}$ has never been seen by the aggregator.}}
    \label{fig: exogenous_enlarge}
\end{figure}
}

\textcolor{black}{
\subsection{Effects of Varying the System Reliability Metric}
\label{results-varying reliability}
In this section, we discuss the effects of varying the system reliability parameters in the daily optimization's constraints (i.e., altering the value of $\nu$ for the system constraints formulated as in \eqref{prob1}-\eqref{prob3}). As described in Sections \ref{subsection: Cluster Price Response} and \ref{subsection: Thompson Sampling 2}, the reliability metric dictates the aggregator's allowable probability of a constraint violation under its current belief distribution about the unknown parameter. Decreasing $\nu$ is restricting the algorithm to avoid violations and setting $\nu=1$ is equivalent to solving the daily optimization without the constraints altogether. In Figure \ref{fig: varyingSafety}, we simulated the system with varying reliability parameters. Specifically, each curve shown is the average regret at node 10 over 20 independent simulations. As shown in Figure \ref{fig: varyingSafety}, the regret increases as the desired reliability increases (smaller $\nu$). This is because the aggregator is forced to select more conservative prices during the learning procedure to ensure that the constraints are met with higher probability. 
}

\begin{figure}[h]
    \centering
    \includegraphics[width=0.9\columnwidth]{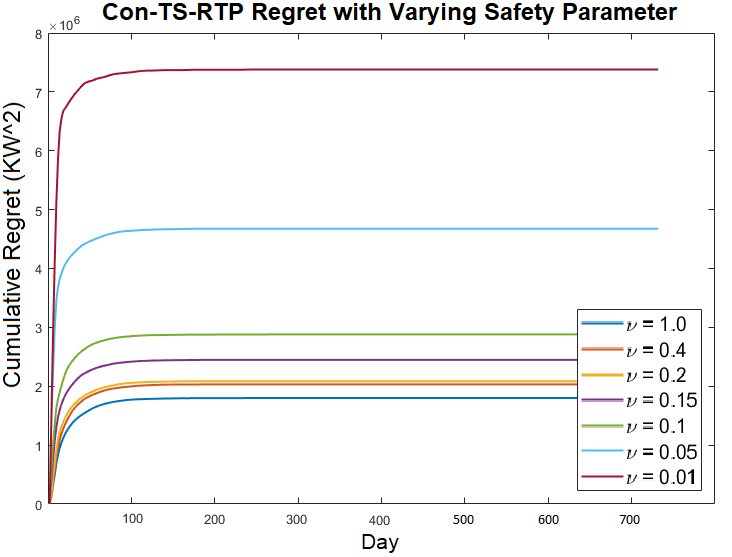}
    \caption{\textcolor{black}{Regret curves for various system reliability metrics. Each curve is an average of 20 independent simulations.}}
    \label{fig: varyingSafety}
\end{figure}

\textcolor{black}{
\subsection{Comparison with Two-Stage Learning}
\label{section: twostage}
In this section, we present a comparison of the Con-TS-RTP approach versus a 2-stage ``learn" and then ``optimize" algorithm, where the first stage consists of pure exploration and the second stage purely exploits the knowledge gained in the first stage. The simulation setup is the same as the setup used in Section \ref{section initial sim}. A description of the 2-stage algorithm used is as follows: The aggregator decides the duration of the learning stage \textit{a priori}, (in Figure \ref{fig: twostage}, we present regret curves for learning stages with durations of 5, 15, and 25 days) and during this learning stage, the aggregator randomly selects price signals from a predetermined safe set of prices (i.e., prices high enough such that constraints cannot be violated), observes the populations' responses, and performs posterior updates. Then, after the learning stage is complete, for the remainder of time the aggregator broadcasts the best price signals with respect to the knowledge of the unknown parameter at the end of the learning stage (the selected price signal will ensure safety but might be potentially suboptimal depending on the duration of the learning stage). Clearly, the two most significant shortcomings of the 2-stage approach are: 1) arbitrarily bad performance during the learning stage due to random price selection; and 2) difficulty selecting a sufficient duration of the learning stage. As seen in Figure \ref{fig: twostage}, this 2-stage myopic algorithm results in linear regret in the 5 day and 15 day learning stage curves. Due to an insufficient number of posterior updates, the aggregator is forced to post suboptimal price signals to ensure safety given its noisy knowledge of the unknown parameter after the learning stage is over. On the other hand, the 25 day learning stage is able to converge to the optimal price signals, but the performance during the learning stage causes fast growth of regret whereas Con-TS-RTP is able to avoid all of the aforementioned shortcomings.
}

\begin{figure}[h]
    \centering
    \includegraphics[width=\columnwidth]{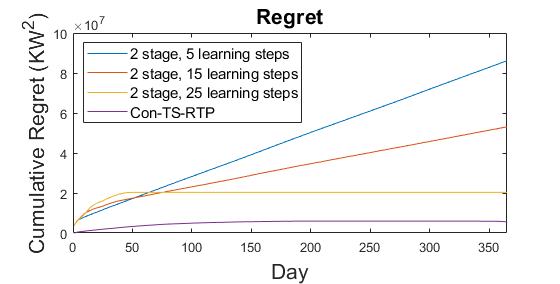}
    \caption{\textcolor{black}{Regret performance of Con-TS-RTP and a 2-Stage algorithm at node 10 with $\nu=0.1$. Note that the 5 day (blue) and 15 day (red) learning algorithms were unable to converge to the optimal price signals. \textcolor{black}{The blue and red curves never flatten because their learning stages were too brief to adequately learn the customers’ preferences and are unable to select the optimal price signals, resulting in a linearly growing regret. However, the 25 day (yellow) learning stage algorithm is able to adequately learn the population’s parameters and select optimal prices after that.}}}
    \label{fig: twostage}
\end{figure}

\section{Conclusion}
In this paper, we presented a multi-armed bandit problem formulation for an electricity aggregator attempting to run a real-time pricing program for load shaping (e.g., to reduce demand at peak hours, integrate more intermittent renewables, track a desired daily load profile, etc). We made use of a constrained Thompson sampling heuristic, Con-TS-RTP, as a solution to the \textit{exploration/exploitation} problem of an aggregator \textit{passively} learning customers' price sensitivities while broadcasting price signals that influence customers to alter their demand to match a desired load profile. The proposed Con-TS-RTP algorithm permits day-varying target load profiles (i.e., multiple target load profiles reflecting renewable forecasts and desired demand patterns) and takes into account the actual operational constraints of a distribution system to ensure that the customers receive adequate service and to avoid potential grid failures. Additionally, our setup accounts for complex electricity usage patterns of the customers by classifying different load clusters based on electricity demand and load flexibility. We discussed a regret guarantee for the proposed Con-TS-RTP algorithm which bounds the total number of suboptimal price signals broadcasted by the aggregator. Furthermore, we discussed an operational reliability guarantee that ensures the power distribution system constraints are upheld with high probability throughout the run of the Con-TS-RTP algorithm.


\bibliographystyle{IEEEtran}
\bibliography{references}

\section*{Appendix}
\vspace{-20pt}
\textcolor{black}{
\subsection{Table of Notation}
\vspace{-6pt}
\begin{align}
    &{\tau} &&\text{Day index} \nonumber \\[-2.0pt]
    &\mathcal{T} &&\text{Total number of days} \nonumber \\[-2.0pt]
    &t &&\text{Time of day index} \nonumber \\[-2.0pt]
    &T &&\text{Number of time epochs in a day} \nonumber \\[-2.0pt]
    &\mathbf{p}_{\tau} &&\text{Daily dispatch signal} \nonumber \\[-2.0pt]
    &\mathcal{P} &&\text{Set of available dispatch signals}\nonumber \\[-2.0pt]
    &f(\cdot) &&\text{Aggregator's fixed daily cost function}\nonumber\\[-2.0pt]
    &\mathbf{D}_{\tau}(\mathbf{p}_{\tau}) &&\text{Population's daily load response} \nonumber \\[-2.0pt]
    &\mathbf{V}_{\tau} &&\text{Daily exogenous parameter} \nonumber \\[-2.0pt]
    &\mathcal{V} &&\text{Set of exogenous parameters} \nonumber \\[-2.0pt]
    &g(\cdot) &&\text{General reliability constraint} \nonumber \\[-2.0pt]
    &u_{\tau}(t) &&\text{Nodal voltage at time }t\text{ on day }\tau  \nonumber \\[-2.0pt]
    &f_{\tau}(t) &&\text{Power flow at time }t\text{ on day }\tau  \nonumber \\[-2.0pt]
    &c &&\text{Flexible appliance cluster index} \nonumber \\[-2.0pt]
    &\mathcal{C} &&\text{Set of flexible appliance clusters} \nonumber \\[-2.0pt]
    &\mathbf{D}_{c} &&\text{Load profile for cluster }c \nonumber \\[-2.0pt]
    &\mathcal{D}_c &&\text{Set of load profiles for cluster }c \nonumber \\[-2.0pt]
    &E_c &&\text{Total energy required by cluster }c \nonumber \\[-2.0pt]
    &\rho_c &&\text{Maximum power rating for cluster }c \nonumber \\[-2.0pt]
    &a_c(\mathbf{p}_{\tau}) &&\text{Preference adjustment model for cluster }c \nonumber \\[-2.0pt]
    &\mathcal{D} &&\text{Set of load profiles for entire population} \nonumber \\[-2.0pt]
    &\widetilde{\mathbf{D}}_{c,\tau}(\mathbf{p}_{\tau}) &&\text{Minimum cost load profile of cluster }c \nonumber \\[-2.0pt]
    &\mathbf{D}_{\tau}^{\star}(\mathbf{p}_{\tau}) &&\text{Population's realized load profile on day }\tau \nonumber \\[-2.0pt]
    &\phi_c &&\text{Preference adjustment distribution for cluster }c \nonumber \\[-2.0pt]
    &\boldsymbol{\theta}^{\star} &&\text{True customer sensitivity model} \nonumber \\[-2.0pt]
    &\Theta &&\text{Set of candidate sensitivity models} \nonumber \\[-2.0pt]
    &\mu &&\text{Aggregator's desired reliability metric} \nonumber \\[-2.0pt]
    &\nu &&\text{Alternate reliability metric} \nonumber \\[-2.0pt]
    &R_{\mathcal{T}} &&\text{Cumulative regret after day }\mathcal{T} \nonumber \\[-2.0pt]
    &N_{\tau}(\mathbf{p},\mathbf{V}) &&\text{Number of times price }\mathbf{p} \text{ selected in response }\nonumber \\[-2.0pt]
    & &&\text{to vector }\mathbf{V} \text{ up to day }\tau \nonumber \\[-2.0pt]
    &\pi_{\tau} &&\text{Prior distribution (aggregator's belief on }\boldsymbol{\theta}^{\star}) \nonumber \\[-2.0pt]
    &\widetilde{\boldsymbol{\theta}}_{\tau} &&\text{Sampled parameter on day }\tau \nonumber \\[-2.0pt]
    &\mathbf{Y}_{\tau} &&\text{Observed load profile on day }\tau \nonumber \\[-2.0pt]
    &\mathcal{Y} &&\text{Observation space} \nonumber \\[-2.0pt]
    &N &&\text{Number of nodes in distribution system} \nonumber \\[-2.0pt]
    &\mathcal{A}_i &&\text{Parent vertex of node }i \nonumber \\[-2.0pt]
    &\mathcal{K}_i &&\text{Set of children vertices of node }i \nonumber \\[-2.0pt]
    &R_i &&\text{Internal resistance of line }i \nonumber \\[-2.0pt]
    &S_i^{max} &&\text{Maximum power limit of line }i \nonumber \\[-2.0pt]
    &d_{i,\tau}(t) &&\text{Power demand at node }i \text{ at time }t \text{ on day }\tau\nonumber \\[-2.0pt]
    &\beta_c &&\text{Prior knowledge of preferences of cluster } c \nonumber
\end{align}
}
\vspace{-10pt}
\subsection{Discussion on Regret Performance}
In this section, we describe the necessary background for Theorem \ref{thm conTS} and then present the full version of the Theorem. Recall, $\mathbf{p}^{\mathbf{V}_{\tau},\star}$ denotes the optimal price signal for the true model of the population's price response $\boldsymbol{\theta}_{}^{\star}$ when the daily exogenous parameter $\mathbf{V}_{\tau}$ is observed on day $\tau$. Any price signal $\mathbf{p}_{\tau}\neq\mathbf{p}_{}^{\mathbf{V}_{\tau},\star}$ is considered a suboptimal price. 

We now briefly explain how the posterior updates affect the regret performance. When price ${\bf  p}$ is posted on day $\tau$, the prior density is updated  as 
\begin{equation}\label{updateprior}
\pi_{\tau}(d \boldsymbol{\theta}) \propto \exp\left( - \log \frac{l( \mathbf{Y}_{\tau};{\bf  p},\boldsymbol{\theta}^{\star})}{l( \mathbf{Y}_{\tau};{\bf  p},\boldsymbol{\theta})}\right) \pi_{\tau-1}(d\boldsymbol{\theta}). \end{equation}
Now, denote by $\textit{KL}(\boldsymbol{\theta}_{\bf p} ^{\star} || \boldsymbol{\theta}_{\bf p})$  the marginal Kullback-Leibler divergence between the distribution $\{ l(Y;{\bf  p},\boldsymbol{\theta}^{\star}): Y\in \mathcal{Y}\} $ and $\{ l(Y;{\bf  p},\boldsymbol{\theta}):Y \in \mathcal{Y}\} $. 
As   in \cite{gopalan2014thompson}, we can approximately write \eqref{updateprior} as:
\begin{align}\label{Approx.dsit}
    \pi_{\tau}(d \boldsymbol{\theta}) \propto \exp \bigg(- \sum_{\mathbf{p} \in \mathcal{P}}N_{\tau}({\bf  p}) \textit{KL}(\boldsymbol{\theta}_{{\bf  p}} ^{\star} ||  \boldsymbol{\theta}_{{\bf  p}}) \bigg) \pi_{\tau-1}(d \boldsymbol{\theta}),
\end{align}
where $N_{\tau}({\bf  p}) = \sum_{{{\bf  V}} \in \mathcal{V}} N_{\tau}({\bf  p},{\bf  V})$, and   $N_{\tau}({\bf  p},{\bf  V})$ is the number of times up to  day $\tau$ that the algorithm simultaneously observes a   daily target load profile   ${\bf  V}$ and posts a price ${\bf  p}$. As such, the total number of times that suboptimal price signals are selected over $\mathcal{T}$ days is:
\begin{align}
    \sum_{\mathbf{V}\in\mathcal{V}}
    \sum_{\mathbf{p}\in\{\mathcal{P}\setminus\mathbf{p}^{\mathbf{V},\star}\}}
    N_{\mathcal{T}}(\mathbf{p},\mathbf{V})
    =
    \sum_{\tau=1}^{\mathcal{T}}
    \mathbbm{1}^{\{\mathbf{p}_{\tau}\neq\mathbf{p}_{}^{\mathbf{V}_{\tau},\star}\}},
\end{align}
where $\mathbbm{1}^{\{\cdot\}}$ is the indicator function that is set equal to one if the criteria is met and zero otherwise.

Furthermore, we define ${\bf N}_{\tau} = [N_{\tau}({\bf p})]_{{\bf p} \in \mathcal{P}}$ as a vector consisting of the number of times each  price is posted up to day $\tau$.  We can consider the quantity in the exponent of \eqref{Approx.dsit} as a loss suffered  by model $\boldsymbol{\theta}$ up to day $\tau$.
Since the term in the exponent of \eqref{Approx.dsit} is equal to 0 when $\boldsymbol{\theta} = \boldsymbol{\theta}^{\star}$, we can see that Thompson sampling samples $\boldsymbol{\theta}^{\star}$ and hence posts the optimal price with at least a constant probability at each day, i.e., $N_{\tau}(\mathbf{p}^{\mathbf{V}_{},\star},{\bf  V_{}})$ grows linearly with $\tau$ for all 
${\bf  V}$.  

For each price, we define $S_{{\bf  p}}({\bf  V}) := \{ \boldsymbol{\theta} \in \Theta: {\bf  p}_\tau = {\bf  p}|{\bf V}_\tau = {\bf V} \}$ to be the set of parameters $\boldsymbol{\theta} \in \Theta$ whose  optimal price when observing a daily target load profile   $\bf V$ is ${\bf  p}$. Furthermore, define $S_{\textbf p}^{'}({\bf  V}) := \{ \boldsymbol{\theta} \in S_{\textbf p}({\bf  V}): \textit{KL}(\boldsymbol{\theta}_{{\bf  p}^{{\bf V,\star}}}^{\star} \| \boldsymbol{\theta}_{{\bf  p}^{{\bf V,\star}}} ) = 0\} $ which is the set of  models $\boldsymbol{\theta}$ that exactly match $\boldsymbol{\theta}^{\star}$ in marginal distribution of  $Y$ when the true model $\boldsymbol{\theta}^{\star}$ is selected and  the optimal price ${\bf  p}^{{\bf V,\star}}$ is posted, and $S_{{\bf  p}}^{''} ({\bf  V}) := S_{{\bf  p}}({\bf  V}) \backslash S_{\textbf p}^{'}({\bf  V})$.

For each of the models $\boldsymbol{\theta}$ in $S_{{\bf  p}}^{''}({\bf  V})$, ${\bf  p} \neq {\bf  p}^{{\bf V,\star}}, \textit{KL}(\boldsymbol{\theta}_{{\bf  p}^{{\bf V,\star}}}^{\star} \| \boldsymbol{\theta}_{{\bf  p}^{{\bf V,\star}}} ) > \varepsilon >0$. As we have assumed that the probability of observing  any target profile ${\bf V} \in \mathcal V$ is bounded away from zero, $N_{\tau}({\bf  p}^{{\bf V,\star}})$   grows linearly with $\tau$ for all ${\bf V}  \in \mathcal V$. Hence, any  such model $\boldsymbol{\theta}$  is sampled with probability exponentially decaying in $\tau$ in \eqref{Approx.dsit} and the regret from such $S_{{\bf  p}}^{''}({\bf  V})$-sampling is negligible. We define the set of all such models as $\boldsymbol{\theta} \in \Theta'' = \cup_{{\bf  V} \in \mathcal V} S_{{\bf  p}}^{''}({\bf  V})$.

A model $\boldsymbol{\theta} \in S_{{\bf p}}^{'}({\bf V})$ will only face loss whenever the algorithm posted a suboptimal price ${\bf  p}$ for which $\textit{KL}(\boldsymbol{\theta}_{{\bf  p}}^{\star} \| \boldsymbol{\theta}_{{\bf  p}}) > 0$. For  ${\bf V}$,  a suboptimal price ${\bf  p}_k^{{\bf  V}} \neq {\bf  p}^{{\bf V,\star}}$ may still be posted if any of the set of models in $S_{{\bf  p}_k^{{\bf  V}}}^{'} ({\bf  V})$ may still be drawn with non-negligible probability. Hence, a price will be eliminated after the probability of drawing all $\boldsymbol{\theta} \in S_{{\bf p}_k^{{\bf  V}}}^{'}({\bf  V})$ is negligible. For each ${\bf V}$, suboptimal prices are eliminated one after the other at times   $\tau_k^{\bf V}, k = 1,\ldots,|\mathcal P|-1$. 
We refer the reader to \cite{gopalan2014thompson} for a full discussion of when a suboptimal price ${\bf p}$ is considered statistically eliminated, which is used to write constraints in \eqref{eqn: clogT} below.

\addtocounter{theorem}{-1}
\begin{theorem}
(Expanded Version) Under assumptions \ref{assump finite}-\ref{assump unique best action} and Constraint Set A in Algorithm \ref{alg: conTS}, for $\delta, \epsilon \in (0,1)$, there exists $\mathcal{T}^{\star}\geq0$ s.t. for all $\mathcal{T}\geq\mathcal{T}^{\star}$, with probability $1-\delta$:\\
\begin{align*}
    \sum_{\mathbf{V}\in\mathcal{V}}
    \sum_{\mathbf{p}\in\{\mathcal{P}\setminus\mathbf{p}^{\mathbf{V},\star}\}}
    N_{\mathcal{T}}(\mathbf{p},\mathbf{V})\leq B + C(\log\mathcal{T}),
\end{align*}

\noindent where $B\equiv B(\delta,\epsilon,\mathcal{P},\mathcal{Y},\Theta)$ is a problem-dependent constant that does not depend on $\mathcal{T}$, and $C(\log\mathcal{T})$ depends on $\mathcal{T}$, the sequence of selected price signals, and the Kullback-Leibler divergence properties of the bandit problem (i.e., the marginal Kullback-Leibler divergences of the observation distributions $\text{KL}\big[\ell(\mathbf{Y};\mathbf{p},\boldsymbol{\theta}^{\star}), \ell(\mathbf{Y};\mathbf{p},\boldsymbol{\theta})\big]$. Specifically, the $C(\log\mathcal{T})$ term is defined as follows:
\begin{align}
\label{eqn: clogT}
    &C(\log\mathcal{T})\equiv\\
    \nonumber
    &\max \sum_{\mathbf{V}\in\mathcal{V}} \sum_{k = 1}^{|\mathcal{P}|-1}  N_{\tau_k^{\mathbf{V}}}(\mathbf{p},\mathbf{V})\\
    \nonumber
    &\text{ s.t. }\forall \;\mathbf{V}\in\mathcal{V},\; \forall j>1,\; \forall1\leq k\leq |\mathcal{P}|-1 :\\
    \nonumber
    & \min_{\boldsymbol{\theta}\in\Big\{ S^{'}_{\mathbf{p}_k^{\mathbf{V}}}(\mathbf{V})-\Theta^{''}\Big\}} \langle \mathbf{N}_{\tau_k^{\mathbf{V}}}, \text{KL}_{\boldsymbol{\theta}} \rangle \geq \frac{1+\epsilon}{1-\epsilon}\log{\mathcal{T}},\\
    \nonumber
    & \min_{\boldsymbol{\theta}\in\Big\{ S^{'}_{\mathbf{p}_k^{\mathbf{V}}}(\mathbf{V})-\Theta^{''}\Big\}} \langle \mathbf{N}_{\tau_k^{\mathbf{V}}}-e^{(j)}, \text{KL}_{\boldsymbol{\theta}} \rangle < \frac{1+\epsilon}{1-\epsilon}\log{\mathcal{T}},
\end{align}
where $e^{(j)}$ denotes the j-th unit vector in finite-dimensional Euclidean space. The last two constraints ensure that price $\mathbf{p}_k^{\mathbf{V}}$ is eliminated at time $t_k^{\mathbf{V}}$ (no earlier and no later).
\end{theorem}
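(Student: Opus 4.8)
The plan is to follow the posterior-concentration analysis of \cite{gopalan2014thompson}, adapted to the exogenous parameter $\mathbf{V}_\tau$ as in \cite{drizzy}, and then to verify that the sampled-parameter chance constraints (Constraint Set A) leave that analysis intact. First I would formalize the optimal-price map induced by the constrained daily optimization: for each $\boldsymbol{\theta}\in\Theta$ and each $\mathbf{V}\in\mathcal{V}$, line~3 of Algorithm \ref{alg: conTS} computes the constrained minimizer $\mathbf{p}(\boldsymbol{\theta},\mathbf{V})$ of the expected cost subject to the chance constraints \emph{evaluated at that same $\boldsymbol{\theta}$}. The crucial observation is that when $\boldsymbol{\theta}=\boldsymbol{\theta}^{\star}$ is sampled, this program is identical to the clairvoyant program \eqref{NOTclairvoyant}--\eqref{NOTclairvoyant_constraint}, so its solution is exactly $\mathbf{p}^{\mathbf{V},\star}$, which is unique by Assumption \ref{assump unique best action}. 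This lets me define $S_{\mathbf{p}}(\mathbf{V})$, $S'_{\mathbf{p}}(\mathbf{V})$, and $S''_{\mathbf{p}}(\mathbf{V})$ exactly as in the preamble but with respect to this \emph{constrained} price map, thereby reducing the constrained problem to the unconstrained posterior-sampling framework.

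Next I would establish posterior concentration on the truth. Using Assumption \ref{assump grain of truth} together with the likelihood-ratio update \eqref{updateprior}, a Chernoff/Azuma argument on the log-likelihood-ratio martingale $\sum_\tau \log[\ell(\mathbf{Y}_\tau;\mathbf{p}_\tau,\boldsymbol{\theta}^\star)/\ell(\mathbf{Y}_\tau;\mathbf{p}_\tau,\boldsymbol{\theta})]$ shows that, outside an event of probability at most $\delta$, the posterior mass $\pi_\tau(\boldsymbol{\theta}^\star)$ stays bounded below by a positive constant for all $\tau$. Consequently $\boldsymbol{\theta}^\star$ is sampled --- and hence the optimal price posted --- with at least constant probability each day, so $N_\tau(\mathbf{p}^{\mathbf{V},\star},\mathbf{V})$ grows linearly in $\tau$ for every $\mathbf{V}$, where the finite i.i.d.\ assumption on $\mathbf{V}_\tau$ (each outcome recurring with probability bounded away from zero) enters.

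With the optimal price posted linearly often, I would control suboptimal selections through the approximate update \eqref{Approx.dsit}. Models in $\Theta''=\cup_{\mathbf{V}} S''_{\mathbf{p}}(\mathbf{V})$ carry strictly positive loss $\textit{KL}>\varepsilon$ under the optimal price and are down-weighted at an exponential rate in $\tau$, so their total contribution is $O(1)$ and is absorbed into $B$. The dominant term comes from $\boldsymbol{\theta}\in S'_{\mathbf{p}_k^{\mathbf{V}}}(\mathbf{V})$, which accrue loss only when the suboptimal price $\mathbf{p}_k^{\mathbf{V}}$ itself is played: that price can be selected only while some such model retains non-negligible posterior mass, i.e.\ while the accumulated divergence $\langle \mathbf{N}_\tau,\textit{KL}_{\boldsymbol{\theta}}\rangle$ has not yet crossed the $\frac{1+\epsilon}{1-\epsilon}\log\mathcal{T}$ threshold. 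Characterizing the elimination time $\tau_k^{\mathbf{V}}$ by exactly this crossing (the two constraints in \eqref{eqn: clogT}) and maximizing $\sum_{\mathbf{V}}\sum_k N_{\tau_k^{\mathbf{V}}}(\mathbf{p},\mathbf{V})$ over feasible play counts yields $C(\log\mathcal{T})$; the burn-in rounds before concentration, the $\delta$-failure events, and the $\Theta''$ contribution collect into the $\mathcal{T}$-independent constant $B$.

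The main obstacle I anticipate is the first step --- verifying that the sampled-parameter chance constraints do not distort the identification of $\mathbf{p}^{\mathbf{V},\star}$ or the partition $\{S'_{\mathbf{p}}(\mathbf{V}),S''_{\mathbf{p}}(\mathbf{V})\}$. Because the feasible region of each daily program moves with the sampled $\boldsymbol{\widetilde{\theta}}_\tau$, one must rule out the possibility that a price infeasible under $\boldsymbol{\theta}^\star$ becomes optimal under a competing $\boldsymbol{\theta}$, which could alter which models sustain a given suboptimal play. The linear-constraint analysis of \cite{moradipari2019safe} indicates that such constraints do not inflate the leading-order regret, and I would adapt that argument to the present nonlinear-cost, finite-$\Theta$ setting; the remaining concentration and elimination steps are then a direct transcription of \cite{gopalan2014thompson, drizzy}.
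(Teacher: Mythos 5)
Your proposal is correct and follows essentially the same route as the paper: the key step in both is observing that Constraint Set A depends only on the sampled $\boldsymbol{\widetilde{\theta}}_{\tau}$, so each sampled parameter still induces a well-defined unique optimal price (with $\boldsymbol{\theta}^{\star}$ mapping to $\mathbf{p}^{\mathbf{V},\star}$), and the problem reduces to the unconstrained posterior-sampling analysis of \cite{gopalan2014thompson} as extended to exogenous parameters in \cite{drizzy}. The concentration and elimination machinery you sketch is exactly what those references supply; the paper's own proof simply states the reduction and cites them.
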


\begin{proof}
In Con-TS-RTP with \textit{Constraint Set A}, the aggregator's daily objective and constraints are dependent on the sampled parameter $\boldsymbol{\widetilde{\theta}}_{\tau}$. The only difference between Con-TS-RTP and the daily optimization in \cite{drizzy} is the added constraints. Since the constraints are only enforced for the sampled parameter, each sampled parameter $\boldsymbol{\widetilde{\theta}}_{\tau}$ still has a unique optimal price signal, and more importantly, the constraints do not prohibit the algorithm from selecting the optimal price for the sampled parameter. As such, the addition of constraints that depend only on the daily sampled parameter does not alter the bandit problem, and the regret analysis follows from \cite{drizzy}, which depends heavily on \cite{gopalan2014thompson}.
\end{proof}
\vspace{-10pt}
\subsection{Discussion on Operational Reliability}
\addtocounter{proposition}{-1}
\begin{proposition}
\label{prop}
(Repeated) Under assumptions \ref{assump finite}-\ref{assump 2}, with $\nu$ in equations \eqref{prob1}-\eqref{prob3} chosen such that $\nu\leq\mu\pi_{0}(\boldsymbol{\theta}^{\star})
e^{-\lambda|\mathcal{P}|}$,  with probability $1-\delta\sqrt{2}$, the Con-TS-RTP algorithm with Constraint Set B will uphold the probabilistic distribution system constraints as formulated in \eqref{NOTclairvoyant_constraint} for each day $\tau=1,\dots,\mathcal{T}$. 
\end{proposition}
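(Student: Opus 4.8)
The plan is to show that whenever the daily price $\hat{\mathbf{p}}_\tau$ satisfies \emph{Constraint Set B}, the \emph{true}-parameter reliability requirement \eqref{NOTclairvoyant_constraint} follows automatically, provided the posterior never loses too much mass on $\boldsymbol{\theta}^\star$. The entire argument is conditioned on the high-probability event $\mathcal{E}$ on which the minimum-mass bound \eqref{eqn: minimum mass} holds for every $\tau \geq 1$; by the cited concentration result (following \cite{gopalan2014thompson}) under Assumption \ref{assump 2}, $\mathbb{P}[\mathcal{E}] \geq 1 - \delta\sqrt{2}$. It therefore suffices to establish constraint satisfaction \emph{deterministically} on $\mathcal{E}$, and the $1 - \delta\sqrt{2}$ probability in the statement is inherited directly from $\mathcal{E}$.

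The key step is to unpack the mixture probability in \emph{Constraint Set B}. For any one constraint (say B.1), conditioning on $\boldsymbol{\theta} \sim \pi_{\tau-1}$ is the Bayesian mixture
\begin{align*}
    \mathbb{P}_{\{\phi_c\}}[u_\tau(t) \geq u^{min} \mid \boldsymbol{\theta} \sim \pi_{\tau-1}]
    = \sum_{\boldsymbol{\theta} \in \Theta} \pi_{\tau-1}(\boldsymbol{\theta})\, \mathbb{P}_{\{\phi_c\}}[u_\tau(t) \geq u^{min} \mid \boldsymbol{\theta}].
\end{align*}
Writing $q := \pi_{\tau-1}(\boldsymbol{\theta}^\star)$ and $P^\star := \mathbb{P}_{\{\phi_c\}}[u_\tau(t) \geq u^{min} \mid \boldsymbol{\theta}^\star]$ (the very quantity \eqref{NOTclairvoyant_constraint} controls), I would isolate the $\boldsymbol{\theta}^\star$ summand and bound each remaining term's probability trivially by $1$. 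The Set-B threshold $\geq 1 - \nu$ then gives
\begin{align*}
    1 - \nu \leq q\,P^\star + (1 - q),
\end{align*}
which rearranges (using $q > 0$, guaranteed below) to $P^\star \geq 1 - \nu/q$.

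Next I would substitute the minimum-mass bound. On $\mathcal{E}$ we have $q = \pi_{\tau-1}(\boldsymbol{\theta}^\star) \geq \pi_0(\boldsymbol{\theta}^\star)\,e^{-\lambda|\mathcal{P}|} > 0$ (strict positivity follows from the grain-of-truth part of Assumption \ref{assump 2}), hence $P^\star \geq 1 - \nu/(\pi_0(\boldsymbol{\theta}^\star)e^{-\lambda|\mathcal{P}|})$. The hypothesis $\nu \leq \mu\,\pi_0(\boldsymbol{\theta}^\star)e^{-\lambda|\mathcal{P}|}$ makes the subtracted term at most $\mu$, yielding $P^\star \geq 1 - \mu$, precisely \eqref{NOTclairvoyant_constraint} for this $g_j$. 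The identical computation applied to B.2 and B.3 dispatches the remaining voltage and line-flow constraints, so all $g_j$ are covered. Since the chain is valid for every $\tau$ on $\mathcal{E}$, the conclusion holds simultaneously for all $\tau = 1,\dots,\mathcal{T}$ with probability $\geq 1 - \delta\sqrt{2}$.

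The only genuinely delicate point is the probabilistic bookkeeping: all of the work lives on $\mathcal{E}$, whose $1 - \delta\sqrt{2}$ probability is borrowed from the argument behind \eqref{eqn: minimum mass} rather than re-derived. Once that uniform-in-$\tau$ guarantee on $\pi_{\tau-1}(\boldsymbol{\theta}^\star)$ is granted, the remainder is an elementary mixture-isolation inequality; the conceptual crux is simply recognizing that the prior retains enough weight on $\boldsymbol{\theta}^\star$ that enforcing the mixture tolerance $\nu$ is strong enough to drive the true marginal below the looser tolerance $\mu$.
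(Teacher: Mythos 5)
Your proposal is correct and follows essentially the same route as the paper's proof: both decompose the Constraint Set B mixture over $\pi_{\tau-1}$, isolate the $\boldsymbol{\theta}^{\star}$ term while bounding the remaining terms by $1$, and invoke the minimum-mass bound \eqref{eqn: minimum mass} (inherited from \cite{gopalan2014thompson} with probability $1-\delta\sqrt{2}$) to conclude $\mathbb{P}^{safe}_j|_{\boldsymbol{\theta}=\boldsymbol{\theta}^{\star}} \geq 1-\nu/\pi_{min}^{\star} \geq 1-\mu$. The only difference is cosmetic: you rearrange to bound the true-parameter probability directly, whereas the paper substitutes the target values and solves for the threshold $\nu = \mu\pi_{min}^{\star}$.
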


\vspace{-10pt}
\begin{proof}
In \cite{gopalan2014thompson}, it is shown that with probability $1-\delta\sqrt{2}$ the mass of the true parameter never decreases below $\pi_{0}(\boldsymbol{\theta}^{\star})e^{-\lambda|\mathcal{P}|}$ in the prior distribution during the entire learning process. 
\noindent As such, the desired reliability metric on the RHS of the constraints \eqref{prob1}-\eqref{prob3}, i.e., $1-\nu$, can be selected such that the constraints \textit{must} hold for the true parameter. Let $\pi_{min}^{\star} = \pi_{0}(\boldsymbol{\theta}^{\star})e^{-\lambda|\mathcal{P}|}$ be the minimum reachable mass of the true parameter in the prior distribution. Furthermore, we abuse notation and denote $\mathbb{P}_j^{safe} = \mathbb{P}_{\{\phi_c\}_{c\in\mathcal{C}}}\big[g_j\big(\mathbf{D}_{\tau}^{}(\mathbf{p}_{\tau})\big)\leq0\big]$ as the probability that constraint $j$ is upheld. 
Now, assuming the aggregator only has knowledge of the true parameter given by the prior distribution $\pi_{\tau}$ on day $\tau$, the aggregator can calculate the probability of satisfying the constraint:
\vspace{-2pt}
\begin{align}
    \sum_{\boldsymbol{\hat{\theta}}\in\Theta} \pi_{\tau}(\boldsymbol{\hat{\theta}})(\mathbb{P}_j^{safe}|\boldsymbol{\theta}=\boldsymbol{\hat{\theta}}).
\end{align}

This can be split into two terms for the true parameter $\boldsymbol{\theta}^{\star}$ and all other parameters $\boldsymbol{\theta}\neq\boldsymbol{\theta}^{\star}$:
\begin{align}
   \pi_{\tau}(\boldsymbol{{\theta}^{\star}})(\mathbb{P}_j^{safe}|\boldsymbol{\theta}=\boldsymbol{\theta}^{\star}) + (1-\pi_{\tau}(\boldsymbol{{\theta}^{\star}}))(\mathbb{P}_j^{safe}|\boldsymbol{\theta}\neq\boldsymbol{\theta}^{\star}).
\end{align}
Now, we can rewrite the probability assuming that $\boldsymbol{\theta}^{\star}$ has reached the minimum mass $\pi_{min}^{\star}$ in the prior distribution:
\begin{align}
    \pi_{min}^{\star}(\mathbb{P}_j^{safe}|\boldsymbol{\theta}=\boldsymbol{\theta}^{\star}) + (1-\pi_{min}^{\star})(\mathbb{P}_j^{safe}|\boldsymbol{\theta}\neq\boldsymbol{\theta}^{\star}).
\end{align}
Recall, the aggregator wants constraint $j$ to hold with probability at least $1-\mu$ for the true parameter $\boldsymbol{\theta}^{\star}$, so we can replace $(\mathbb{P}_j^{safe}|\boldsymbol{\theta}=\boldsymbol{\theta}^{\star})$ with $1-\mu$. Furthermore, $(\mathbb{P}_j^{safe}|\boldsymbol{\theta}\neq\boldsymbol{\theta}^{\star})\leq1$ and we replace it accordingly yielding:
\begin{align}
    \pi_{min}^{\star}(1-\mu) + (1-\pi_{min}^{\star}).
\end{align}
Now, we want this probability to be the minimum allowable probability across the prior $\pi$ for constraint $j$ to hold so we set it equal to the reliability metric:
\begin{align}
    \pi_{min}^{\star}(1-\mu) + (1-\pi_{min}^{\star}) = 1-\nu,
\end{align}
which yields
\vspace{-8pt}
\begin{align}
    \nu = \mu\pi_{min}^{\star}.
\end{align}
Moreover, by selecting $\nu \leq \mu\pi_{min}^{\star}$ the aggregator ensures that constraint $j$ will be upheld with probability at least $1-\mu$ for the true parameter $\boldsymbol{\theta}^{\star}$. (i.e., the total mass of the incorrect  parameters $\boldsymbol{\theta}\neq\boldsymbol{\theta}^{\star}$ in the prior distribution $\pi_\tau$ can never be large enough to satisfy the constraint's inequality without the true parameter also satisfying the constraint). \end{proof} 


\end{document}